\newcommand{\com}[1]{\textcolor{red}{#1}}
\theoremstyle{thmstyleone}%
\theoremstyle{thmstyletwo}%
\theoremstyle{thmstylethree}%
\theoremstyle{definition}
\newtheorem{thm}{Theorem}[section]
\numberwithin{thm}{section}
\newtheorem{defn}[thm]{Definition}
\newtheorem{ex}[thm]{Example}
\newtheorem{rmk}[thm]{Remark}
\newtheorem{prop}[thm]{Proposition}
\newtheorem{corr}[thm]{Corollary}
\newtheorem{lem}[thm]{Lemma}
\numberwithin{equation}{section}
\begin{document}

\title{Semigroup models for biochemical reaction networks}


\author{\fnm{Dimitri} \sur{Loutchko}} \email{d.loutchko@edu.k.u-tokyo.ac.jp}

\affil{\orgdiv{Institute of Industrial Science}, \orgname{The University of Tokyo}, \orgaddress{\street{4-6-1, Komaba}, \city{Meguro-ku}, \postcode{153-8505}, \state{Tokyo}, \country{Japan}}}

\abstract{
The catalytic reaction system (CRS) formalism by Hordijk and Steel is a versatile method to model autocatalytic biochemical reaction networks.
It is particularly suited, and has been widely used, to study self-sustainment and self-generation properties.
Its distinguishing feature is the explicit assignment of a catalytic function to chemicals that are part of the system.
In this work, it is shown that the subsequent and simultaneous catalytic functions gives rise to an algebraic structure of a semigroup with the additional compatible operation of idempotent addition and a partial order.
The aim of this article is to demonstrate that such semigroup models are a natural setup to describe and analyze self-sustaining CRS.
The basic algebraic properties of the models are established and the notion of the function of any set of chemicals on the whole CRS is made precise.
This leads to a natural discrete dynamics on the network, which results from iteratively considering the self-action on a set of chemicals by its own function.
The fixed points of this dynamics are proven to correspond to self-sustaining sets of chemicals, which are functionally closed.
Finally, as the main application, a theorem on the maximal self-sustaining set and a structure theorem on the lattice of functionally closed self-sustaining sets of chemicals are proven.
}

\keywords{Biochemical reaction networks, Autocatalytic sets, Algebraic models, Finite semigroups}



\maketitle

\section{Introduction}

In order to describe and understand the organization of very large biochemical reaction networks, in particular with respect to their evolutionary origins, the classical models developed by \cite{Horn1972} and \cite{Feinberg1987,feinberg2019} based on systems of ordinary differential equations are often not the optimally suited mathematical framework.
Although they are most capable of describing a wide plethora of biochemical phenomena, as is vividly presented in the books by \cite{beard2008}, \cite{mikhailov2017} and many others, these models are limited by their relatively high computational cost and their reliance on the precise knowledge of kinetic parameters.
Several other approaches have been proposed with the primary aim of capturing general and universal properties of biochemical reaction networks, with a particular focus on the functional organization and self-sustainment as well as self-generation properties of the reaction networks of living and evolving systems.
Examples include $(M,R)$-systems by \cite{Rosen1958}, hypercycles introduced by \cite{Eigen1971}, autopoetic systems studied by \cite{Varela1974}, chemotons by \cite{Ganti1975}, autocatalytic sets by \cite{Kauffman1986} and catalytic reaction systems developed by \cite{Hordijk2004}.
The common property of such models is that they focus on the catalytic function of network reactions by chemicals which are themselves produced by reactions within the network.
They usually do not require kinetic details, but only the knowledge of the occurring reactions, together with data on catalysis.
An in-depth discussion and comparison of such approaches was given by \cite{Hordijk2018}.\\

The construction of the semigroup models presented in this article is based on the formalism of catalytic reaction systems (CRS) introduced by \cite{Hordijk2004,Hordijk2018,Hordijk2011}, which is a generalization of autocatalytic sets by \cite{Kauffman1986} and is broad enough to encompass several of the aforementioned approaches.
A CRS is given by the datum of a chemical reaction network, i.e. by a finite set of chemicals and specified chemical reactions, together with the assignment of catalytic functionality to certain chemicals.
The main purpose of this article is to flesh out the algebraic structure which is inherent in the CRS formalism given by the occurrence of simultaneous and subsequent reactions and to investigate the properties of the resulting semigroup models.
In particular, the construction allows to make the notion of the catalytic function of chemicals and of sets of chemicals on the whole CRS precise.
In order to illustrate that the semigroup models provide a well-suited mathematical language for CRS, it is shown how self-sustaining subsets of chemicals can be characterized in a concise manner and how the largest self-sustaining subset of chemicals can be determined for any CRS.
Moreover, within the semigroup formalism it is natural to consider self-sustaining sets of chemicals which cannot produce chemicals not already contained in the respective set.
Such sets have not been considered in the literature thus far and the term of {\it functionally closed set} of chemicals is coined here for them.
The lattice of all functionally closed sets of the CRS is characterized and its potential applications to the analysis of CRS which model real biological systems are discussed.
In a nutshell, the semigroup models provide natural tools to concisely state and solve problems in CRS theory and to find new meaningful constructions within the theory.\\

Besides their applications to CRS theory, semigroup models open up the field of biochemical reaction networks to be tackled with algebraic methods.
For example, the equivalence of finite semigroups and finite automata suggests to investigate the computational capabilities of chemical reaction networks from this abstract point of view.
Moreover, it will be interesting and rewarding to construct a purely algebraic description of semigroup models without the need to consider their representation as maps on a finite set.
One step in this direction, which is proven in this work, is the fact that the semigroup models of CRS with a nontrivial self-sustaining sets of chemicals cannot be nilpotent.
This excludes a large class of potential semigroup models to be considered in inverse problems in the future.\\

\paragraph{Mathematical outline} 

In Section \ref{sec:CRS}, the CRS formalism is introduced following \cite{Hordijk2004,Hordijk2017}.
More precisely, a CRS uses the datum of a chemical reaction network, i.e. a finite set of chemicals $X$ together with a finite set of net chemical reactions $R$.
In addition, a set of catalysis data $C \subset X \times R$ is specified by stipulating that for each $(x,r) \in C$, the reaction $r$ is catalyzed by the chemical $x$.
Finally, a subset $F \subset X$, called food set, of constantly supplied chemicals is given.
The kinetic rate constants of the chemical reaction network are not part of the datum of a CRS.
Within the CRS formalism, the notion of self-sustainment has been formalized under the name of pseudo-RAF by \cite{Steel2015} and \cite{Hordijk2017}.\\

In Section \ref{sec:SemigroupBasics}, the catalytic function of chemicals is given the structure of a semigroup, which is additionally equipped with a partial order and an idempotent addition.
To begin with, to each reaction $r \in R$, a function $\phi_r$ is assigned as the set-map $\phi_r: \mathfrak{X} \rightarrow \mathfrak{X}$ on the power set $\mathfrak{X} :=\mathcal{P}(X_F)$ of non-food chemicals $X_F = X \setminus F$.
The function $\phi_r$ gives the set of non-food products of the reaction $r$ if and only if the set of non-food reactants is contained in its argument.
Such functions have an idempotent addition via $(\phi_r + \phi_{r'})(Y) = \phi_r(Y) \cup \phi_{r'}(Y)$ for any $Y \subset X_F$ and any $r,r' \in R$.
The function $\phi_x$ of a chemical $x \in X$ is defined as the sum of the reactions catalyzed by it:
\begin{equation*}
 \phi_x = \sum_{(x,r) \in C} \phi_r.
\end{equation*}
The semigroup model
\begin{align*}
 \mathcal{S} = \langle \phi_x \rangle _{x\in X}
\end{align*}
of a CRS is generated by the functions $\{\phi_x\}_{x\in X}$ through addition $+$ and through the composition of functions $\circ$.
Thus, $\mathcal{S}$ is a semigroup with respect to both $+$ and $\circ$ and hence it is called a {\it semigroup model} of the CRS.
For any subset $Y \subset X_F$, the semigroup model $\mathcal{S}(Y)$ is defined as $\mathcal{S}(Y) = \langle \phi_x \rangle _{x\in Y \cup F}$.
Moreover, $\mathcal{S}$ is endowed with a partial order given by $\phi \leq \psi$ iff $\phi(Y) \subset \psi(Y)$ for all $Y \subset X_F$.
The unique maximal element of $\mathcal{S}(Y)$ is denoted by $\Phi_Y$ - it represents all catalytic functionality that can be exhibited by the set $Y$ on the whole CRS.
In Section \ref{sec:SElementaryProperties}, elementary properties of the semigroup models are established.
In a nutshell, these properties ensure that the two operations and the partial order on $\mathcal{S}$ are compatible.\\

In Section \ref{sec:RAF}, the semigroup models are used to derive theorems characterizing self-sustaining CRS and subCRS and their corresponding sets of chemicals.
Theorem \ref{thm:PRAF} concisely states that a CRS is self-sustaining if and only if the set of chemicals $X_F$ is able to fully reproduce itself.
In the semigroup formalism this reads as
\begin{equation*}
 \Phi_{X_F}(X_F) = X_F.
\end{equation*}
It is then shown that each self-sustaining set of chemicals $X'_F \subset X_F$ satisfies $X'_F \subset \Phi_{X'_F}(X'_F)$ (Corollary \ref{corr:subPRAF}) and that the equality $X'_F = \Phi_{X'_F}(X'_F)$ is a sufficient condition to be a self-sustaining set of chemicals (Proposition \ref{prop:PRAFsuff}).
Finally, a discrete dynamics on the state space $\mathfrak{X}$ is defined by the self-action of the chemicals on themselves, i.e. by $Y \mapsto \Phi_Y(Y)$ for $Y \in \mathfrak{X}$.
It is shown that the dynamics with the initial condition $X_F$ leads to a fixed point, denoted by $X_F^{*s}$ and, using a combination of the tools developed in this article, it is proven that the maximal self-sustaining set chemicals of any CRS is given by this fixed point (Theorem \ref{thm:maxPRAF}).\\

In Section \ref{sec:functionalClosure}, new concepts are introduced into CRS theory, which are motivated by the tools from the preceding sections.
The containment of a self-sustaining set of chemicals $X_F'$ in the set of chemicals which it generates, i.e. $X_F' \subset \Phi_{X'_F}(X'_F)$, suggests that such self-sustaining sets should not be observable in real biological systems.
Instead, they will continue to produce chemicals according to the dynamics ${X_F' =: Y_0 \mapsto \Phi_{Y_0}(Y_0) =: Y_1 \mapsto \Phi_{Y_1}(Y_1) \mapsto \dotsm}$, which is shown to stabilize at the self-sustaining set of chemicals $X_F'^{*s}$ (Proposition \ref{prop:increasingFixedPoint}).
The set $X_F'^{*s}$ is called the functional closure of $X_F'$ and it is conjectured that within a CRS such sets comprise biologically relevant functional modules of a CRS.
To determine the lattice of all functionally closed sets of chemicals, a restricted dynamics on $\mathfrak{X}$ is introduced via $Y \mapsto Y \cap \Phi_Y(Y)$.
This dynamics has a fixed point $Y_0^{*rs}$ for any initial condition $Y_0 \in \mathfrak{X}$.
The desired lattice is given by the set of fixed points
\begin{equation*}
    \mathfrak{M} := \bigcup_{i=0}^{\mid X_F^{*s} \mid} \mathfrak{M}^{i} \subset \mathfrak{X},
\end{equation*}
where the $\mathfrak{M}^i$ are defined by the recursion 
\begin{align*}
    \mathfrak{M}^0 &:= \{ X_F^{*s} \} \\
    \mathfrak{M}^{i+1} &:= \bigcup_{Y \in  \mathfrak{M}^{i}(X_F^{*s})} \mathfrak{M}(Y)
\end{align*}
for $i \geq 0$ and where
\begin{equation*}
   \mathfrak{M}(Y) := \left\{  (Y\setminus\{y\})^{*rs} \right\}_{y \in Y} \subset \mathfrak{X}
\end{equation*}
for any $Y \in \mathfrak{X}$.
This is proven in Theorem \ref{thm:M} and provides an explicit description, which can be algorithmically implemented, of the lattice.\\

In a companion article by \cite{loutchko2022arxiv}, the self-generating properties of CRS are investigated in an analogous fashion.
Although the logical structure of the line of reasoning is analogous to the one presented here, the proofs themselves require more technical setup, which heavily relies on a representation of the semigroup elements as decorated rooted trees.
In this regard, Section \ref{sec:RAF} of this article can be seen as a blueprint to the work in \cite{loutchko2022arxiv}.
The current article is aimed as an introduction to semigroup models of CRS, whereas the companion article is focused on their structure and provides more technical tools.

\section{The CRS Formalism} \label{sec:CRS}

\subsection{Basic Notions}

The catalytic reaction system (CRS) formalism is introduced following \cite{Hordijk2004}.
A chemical reaction network is a finite set of chemicals $X$ together with a finite set of net chemical reactions $R$.
Thereby, a reaction $r \in R$ is given by a pair of sets $r=(\textrm{dom}(r),\textrm{ran}(r))$ which satisfy $\textrm{dom}(r),\textrm{ran}(r) \subset X$ and $\textrm{dom}(r) \cap \textrm{ran}(r) = \emptyset$ and are called the {\it domain} and the {\it range} of the reaction\footnote{Usually, a net chemical reaction is denoted as $r: a_1 A_1 + a_2 A_2 + ... + a_n A_n \longrightarrow b_1 B_1 + b_2 B_2 + ... + b_m B_m$, where $a_i,b_j \in \mathbb{N}$ and $A_i,B_j \in X$ such that $A_i \neq B_j$ for all $i=1,...,n$ and $j=1,...,m$.
In the definition used here, the stoichiometric information is discarded and $\textrm{dom}(r) = \{A_i\}_{i=1}^n$ and $\textrm{ran}(r) = \{B_j\}_{j=1}^m$ are used instead.}
The latter condition ensures that one is dealing with a {\it net} chemical reaction. 
Catalysis and autocatalysis is treated in Definition \ref{def:CRS} below by giving additional data on the catalyzed reactions.
The support $\text{supp}(r)$ of a reaction is defined as
\begin{equation*}
 \text{supp}(r) = \text{dom}(r) \cup \text{ran}(r).
\end{equation*}
In addition, for a set of reactions $R' \subset R$, the domain, range and support are defined as
\begin{align*}
\text{dom}(R') &= \bigcup_{r \in R'} \text{dom}(r), \\
\text{ran}(R') &= \bigcup_{r \in R'} \text{ran}(r), \\
\text{supp}(R') &= \bigcup_{r \in R'} \text{supp}(r) = \text{dom}(R') \cup \text{ran}(R').
\end{align*}
Classically, a set of rate constants together with the stoichiometry of each $r \in R$ would lead to a system of coupled ordinary differential equations for the time evolution for the concentration of each chemical $x \in X$ given by a kinetic model such as mass action kinetics.
However, the CRS formalism does not utilize this detailed kinetic information but instead emphasizes the catalytic function of the chemicals in $X$.
Moreover, a food set $F \subset X$ is specified.
This set is to be thought of as the set of externally supplied chemicals, which are always readily available to the system.
The abbreviation
\begin{equation*}
    X_F := X \setminus F
\end{equation*}
for the set of all non-food chemicals will be used.
Often, a subset of $X_F$ will be denoted by $X'_F$, which implicitly carries the information of the corresponding subset $X' = X'_F \cup F$ of $X$.

\begin{defn} \label{def:CRS}
A {\it catalytic reaction system} (CRS) is a tuple $(X,R,C,F)$ where $X$ is a finite set of chemicals, $R$ is a finite set of reactions, $C \subset X \times R$ is the catalysis data, and $F \subset X$ is the constantly present food set.
For a pair $(x,r) \in C$, the reaction $r$ is said to be catalyzed by $x$.
The food set is required to satisfy the following closure property:
\begin{enumerate} [label=(C),leftmargin=1cm]
    \item All reactions $r \in R$ with a catalyst in $F$ must involve reactants outside of $F$, i.e. they must satisfy $\text{dom}(r) \cap X_F \neq \emptyset$.  \label{cond:C}
\end{enumerate}
If $X=F$, the CRS is said to be {\it trivial}.
\end{defn}

\noindent It is convenient to introduce the projection
\begin{align} \label{eq:projC}
    \pi_R: C \rightarrow R,
\end{align}
which gives all catalyzed reactions of the CRS.

\begin{rmk}
In the CRS literature it is common to consider the triple $(X,R,C)$ as the definition of a CRS and to give the food set $F \subset X$ as an additional datum, cf. \cite{Hordijk2004,Hordijk2011}.
The tuple $(X,R,C,F)$ from Definition \ref{def:CRS} would then be called a {\it CRS with food set}.
Because this article is concerned with tuples $(X,R,C,F)$ exclusively, they are simply called CRS for the sake of readability.
The closure condition \ref{cond:C} is not commonly used in the CRS literature.
However, it poses no serious restrictions but merely serves to exclude trivial cases of self-sustainment, wherein chemicals in $X_F$ are produced purely from the food set by reactions with catalysts in the food set.
\end{rmk}

Following \cite{Bonchev2012}, a CRS can be represented as a directed bipartite graph with a partition of the edges into edges corresponding to reactions and edges corresponding to catalysis.
As an example, consider the graph in Fig. \ref{fig:example1}.
The vertices represented by solid disks correspond to the chemicals $X$ and the vertices represented by circles correspond to reactions $R$.
Solid directed edges from chemicals to a reaction indicate that the respective chemicals are the reactant of the reaction, whereas solid directed edges from a reaction to chemicals are present for the products of the respective reaction.
Dashed edges from a chemical $x$ to a reaction $r$ represent elements $(x,r)$ in the catalysis data $C$.
The food set is indicated by a circle, which encloses the chemicals belonging it.
The stoichiometry of the reactions is not shown in the graph.

\begin{figure}[htb]
  \centering
  \includegraphics[scale=0.25]{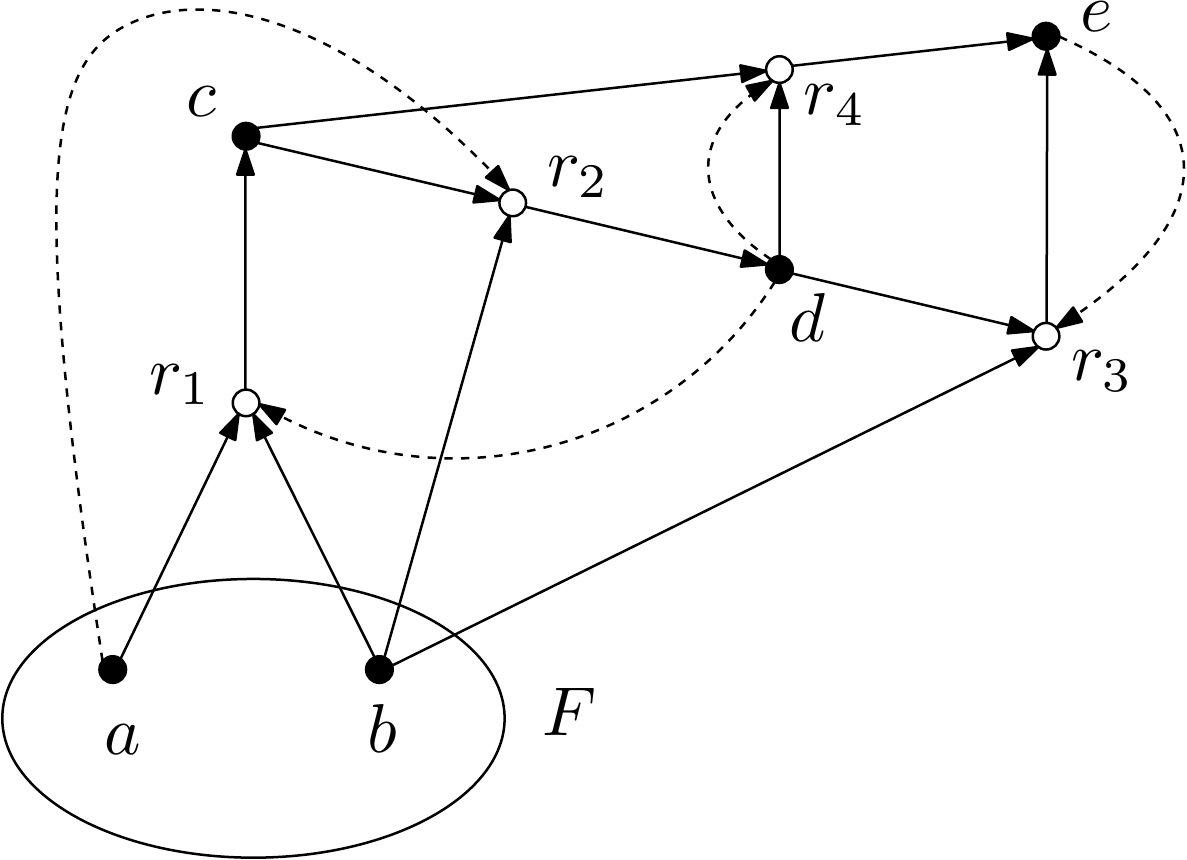}
  \caption[Example of a catalytic reaction system]{Example of a graphical representation of a CRS. 
  The CRS consists of five chemicals $X=\{a,b,c,d,e\}$ and four reactions $a+b\rightarrow c$, $c+b \rightarrow d$, $b+d \rightarrow e$ and $c+d \rightarrow e$, which are catalyzed by $d$, $a$, $e$ and $d$, respectively.
  The food set is given by $F = \{a,b\}$.}
  \label{fig:example1}
\end{figure}

\begin{defn} \label{def:subCRS}
 A tuple $(X',R',C',F)$ is said to be a {\it subCRS} of the full CRS $(X,R,C,F)$ if the following properties are satisfied

\begin{enumerate} [label=(C\arabic*),leftmargin=1cm]
    \item $X' \subset X$,  \label{cond:S1}
    \item The food set $F \subset X'$ is identical to the one of the full CRS. \label{cond:S0}
    \item $R' \subset R\mid_{X'}$, \label{cond:S2}
    \item $C' = C \cap (X' \times R') \subset C\mid_{X'}$, \label{cond:S3}
\end{enumerate}
where the restrictions $R\mid_{X'}$ and $C\mid_{X'}$ are given by
\begin{align*}
    R\mid_{X'} &= \{r \in R \text{ with supp}(r) \subset X'\}, \\
    C\mid_{X'} &=  C \cap (X' \times R\mid_{X'}). 
\end{align*}
\end{defn}

The sets $R\mid_{X'}$ and $C\mid_{X'}$ represent the maximal set of reactions supported on $X'$ and the maximal catalysis data with reactions supported on $X'$ and catalysts in $X'$, respectively.

Note that a subCRS is determined by solely the two subsets $X' \subset X$ and $R' \subset R\mid_{X'}$. 
The set $C'$ is the maximal catalysis data for reactions in $R'$ that are catalyzed by some chemical in $X'$.
A subCSR $(X',R',C',F)$ always satisfies condition \ref{cond:C}:
By Definition \ref{def:subCRS}, a reaction $r \in R'$ satisfies $\textrm{dom}(r) \subset X'$ and therefore $\text{dom}(r) \cap X'_F = \text{dom}(r) \cap X_F$ holds.
If it has a catalyst in $F$, then $\text{dom}(r) \cap X_F \neq \emptyset$ holds by condition \ref{cond:C} for the full CRS $(X,R,C,F)$, which implies $\text{dom}(r) \cap X'_F \neq \emptyset$.
Therefore, a subCRS is always a CRS.\\

The subCRS of a CRS are ordered by inclusion, i.e. two subCRS are related by $(X',R',C',F) \leq (X'',R'',C'',F)$ iff $X' \subset X''$ and $R' \subset R''$ are satisfied.
For a set of non-food chemicals $X'_F \subset X_F$, the maximal subCRS with the set of chemicals $X'$ is given by $(X',R\mid_{X'},C\mid_{X'},F)$.
It is called the {\it subCRS generated by $X'_F$} and denoted by
\begin{equation} \label{eq:CRSX}
    CRS(X'_F):= (X',R\mid_{X'},C\mid_{X'},F).
\end{equation}

\subsection{Catalytic and self-sustainment properties}

Self-sustaining CRS are characterized by the property that all non-food chemicals are products of catalyzed reactions which use only chemicals produced by the reactions themselves and the food set.
Self-sustaining CRS have been discussed in \cite{Hordijk2017} and \cite{Steel2015}, where they are called pseudo-RAF.
In this section, self-sustaining CRS and, based on them, self-sustaining sets of chemicals are defined.
Finally, the precise connection to the definition of pseudo-RAF as given by \cite{Hordijk2017} is discussed.

\begin{defn} \label{def:PRAF}
    A CRS $(X,R,C,F)$ is called {\it self-sustaining} if the following condition is satisfied:
    \begin{enumerate} [label=(S),leftmargin=1cm]
    \item There exists a set of reactions $R' \subset \pi_R(C)$ such that $X_F \subset \textrm{ran}(R')$ and $\textrm{dom}(R') \subset \textrm{ran}(R') \cup F$.
    \label{cond:PR}
\end{enumerate} 
\end{defn}

Note that it is not necessary that all reactions in $R$ are catalyzed but only that the CRS has enough catalyzed reactions to satisfy condition \ref{cond:PR}.
A subCRS is said to be self-sustaining if it satisfies the Definition \ref{def:PRAF}.
For a given set of chemicals $X' \subset X$, the level of catalysis can differ between subCRS which have $X'$ as the set of chemicals.
If all catalyzed reactions of the full CRS $(X,R,C,F)$ which have a catalyst and support in $X'$ are included in $R'$, then the respective subCRS is said to be closed.

\begin{defn} \label{def:CRAF}
A subCRS $(X',R',C',F)$ is {\it closed} if its set of reactions satisfies
\begin{equation*}
    \pi_{R'}(C') = \pi_{R\mid_{X'}}(C\mid_{X'}),
\end{equation*}
i.e. if all catalyzed reactions with catalyst and support in $X'$ are included in $R'$.
\end{defn}
\noindent The CRS $(X,R,C,F)$ is closed by definition.
Moreover, it has a maximal, possibly trivial, self-sustaining subCRS given by the union of all its self-sustaining subCRS (the union of two given subCRS is the smallest subCRS for which both are subCRS).
This maximal self-sustaining subCRS is necessarily closed.

A set of chemicals $X'_F \subset X_F$ generates $CRS(X'_F)$ via formula (\ref{eq:CRSX}) and thus the notion self-sustainment is inherited by it:
\begin{defn} \label{def:RAFsetofchemicals}
    A set of chemicals $X'_F \subset X_F$ is said to be a {\it self-sustaining set of chemicals} if the maximal subCRS generated by it, $CRS(X'_F)$, is self-sustaining.
    The empty set $\emptyset \subset X_F$ is said to be the trivial self-sustaining set of chemicals.
\end{defn}
\noindent If $X'_F$ is a self-sustaining set of chemicals, then $CRS_F(X'_F)$ is automatically closed due to the maximality of $R\mid_{X'}$ and $C\mid_{X'}$.
Therefore, the notion of self-sustainment on the level of chemicals does not encompass non-closed CRS.

\begin{ex}
The CRS in Fig. \ref{fig:example2} is self-sustaining in the sense that all non-food chemicals $X_F = \{a,b,c\}$ are produced by the set of reactions $R =\{r_1,r_2,r_3\}$.
The set of reactions satisfies $X_F \subset \textrm{ran}(R) = \{a,b,c\}$ and $\textrm{dom}(R) = \{a,b,c,d\} \subset \textrm{ran}(R) \cup F = \{a,b,c,d\}$.
Note that none of the chemicals can be generated from the food set alone.
\begin{figure}[htb]
  \centering
  \includegraphics[scale=0.25]{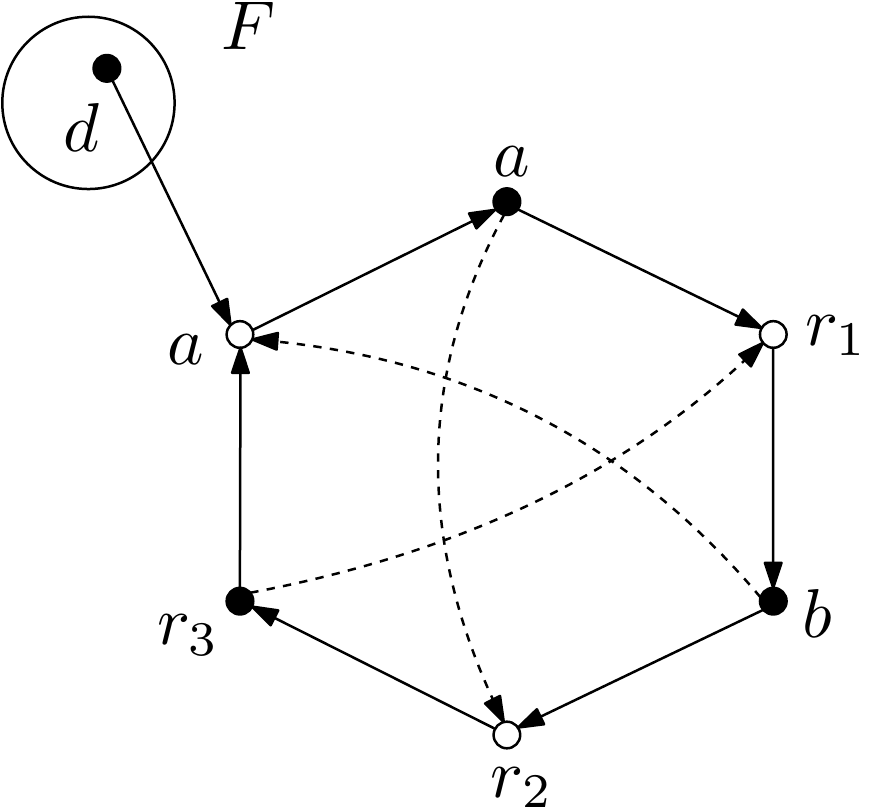}
  \caption{An example of a self-sustaining CRS.}
  \label{fig:example2}
\end{figure}
\end{ex}

\begin{rmk}[Relation to the notion of pseudo-RAF commonly used in the literature] \label{rmk:RAFliterature}
The definition of a pseudo-RAF given by \cite{Hordijk2017} (see also \cite{steel2013,Steel2015,Hordijk2015}) is based on a set $R' \subset \pi_R(C)$ of catalyzed reactions.
More precisely, a subset $R' \subset \pi_R(C)$ is called pseudo-RAF if $\textrm{dom}(R') \subset \textrm{ran}(R') \cup F$ holds.
Therefore, given a subCRS $(X',R',C',F)$, which is self-sustaining according to Definition \ref{def:PRAF}, the set $\pi_{R'}(C')$ is a pseudo-RAF set of reactions.
{\it Vice versa}, given a pseudo-RAF set of reactions $R' \subset R$, the subCRS $(X',R',C',F)$ with $X' := \textrm{supp}(R')$ and $C':= C \cap ( X' \times R')$ is self-sustaining according to Definition \ref{def:PRAF}.
Therefore, both definitions are equivalent modulo the inclusion of uncatalyzed reactions in the sets of reactions in the definition used in this article\footnote{This can be made precise by saying that two subCRS $(X',R'_1,C'_1,F)$ and $(X',R'_2,C'_2,F)$ with the same set of chemicals $X'$ are catalytically equivalent if $\pi_{R'_1}(C'_1) = \pi_{R'_2}(C'_2)$ holds.
The catalytical equivalence classes of subCRS are ordered by applying the partial order of subCRS to the minimal representatives of each catalytical equivalence class.}.

The self-sustaining sets of chemicals introduced in Definition \ref{def:RAFsetofchemicals} correspond to closed pseudo-RAF sets of reactions via $X'_F \mapsto R\mid_{X'}$ and, {\it vice versa}, a closed pseudo-RAF set of reactions $R'$ yields the corresponding self-sustaining set of chemicals as $X'_F := \textrm{supp}(R') \cap X_F$.
\end{rmk}

\section{The semigroup model of a CRS} \label{sec:SGM}

The catalytic function of the chemicals gives rise to an algebraic structure generated by the simultaneous and subsequent catalytic functionality.
In this section, this structure is formalized mathematically and leads to the notion of a semigroup model of a CRS.
Throughout this section, let the CRS $(X,R,C,F)$ be fixed.

\subsection{Construction} \label{sec:SemigroupBasics}

The state of the CRS is defined by the presence or absence of each of the non-food chemicals, i.e. by a subset $Y \subset X_F$ (the chemicals in the food set are always present).
Thus the state space $\mathfrak{X}$ of the CRS is defined to be the power set of $X_F$
\begin{align*}
    \mathfrak{X} := \mathcal{P}(X_F).
\end{align*}
The function of a given chemical $x \in X$ is defined via the reactions it catalyzes, i.e. by the way it acts on the state space $\mathfrak{X}$.
This definition is inspired by the work of \cite{Rhodes2010}.

\begin{defn} \label{def:func}
Let $(X,R,C,F)$ be a CRS with food set.
The {\it function} $\phi_r$ of a reaction $r \in R$ is the map
\begin{equation*}
 \phi_r : \mathfrak{X} \rightarrow \mathfrak{X}
\end{equation*}
given by
\begin{equation} \label{eq:phi_r}
  \phi_r(Y) =\begin{cases}
    \text{ran}(r) \cap X_F & \text{if $\text{dom}(r) \subset Y \cup F$}\\
	\emptyset & \text{else}
  \end{cases}
\end{equation}
for all $Y \subset X_F$ \footnote{Note that although $Y$ does not contain $F$, the constant presence of the food set is accounted for in the definition via formula (\ref{eq:phi_r}).}.

\noindent For any two maps $\phi, \psi: \mathfrak{X} \rightarrow \mathfrak{X}$, the sum $(\phi + \psi): \mathfrak{X} \rightarrow \mathfrak{X}$ and product $(\phi \circ \psi): \mathfrak{X} \rightarrow \mathfrak{X}$ are given by
\begin{align} \label{eq:addition}
  (\phi + \psi)(Y) &:= \phi(Y) \cup \psi(Y) \\
  \label{eq:multiplication}
  (\phi \circ \psi)(Y) &:= \phi (\psi(Y))
\end{align} for all $Y \subset X_F$.
The {\it function} $\phi_x: \mathfrak{X} \rightarrow \mathfrak{X}$ of a chemical $x \in X$ is defined as the sum over all reactions catalyzed by $x$, i.e.
\begin{equation} \label{eq:phi_x}
\phi_x = \sum_{(x,r) \in C} \phi_r.
\end{equation}

\end{defn}
The multiplication $\circ$ is the usual composition of maps and therefore associative.
The addition is associative, commutative and idempotent.
The two operations $\circ$ and $+$ have obvious interpretations in terms of the function of enzymes on a CRS:
The sum of two functions $\phi_x + \phi_y$ with $x,y \in X$ describes the {\it joint} or {\it simultaneous} function of two chemicals $x$ and $y$ on the CRS - it captures the reactions catalyzed by both $x$ and $y$.
The composition of two functions $\phi_x \circ \phi_y$ with $x,y \in X$ describes their {\it subsequent} function: first $y$ and then $x$ act on the state space by their respective catalytic functions.\\

Recall that the {\it full transformation semigroup} $\mathcal{T}(A)$ of a finite discrete set $A$ is the set of all maps $\{f:A \rightarrow A\}$, where the semigroup operation $\circ$ is the composition of maps.

\begin{defn} \label{def:semigroupModel}
Let $(X,R,C,F)$ be a CRS.
Its {\it semigroup model} $\mathcal{S}$ is the smallest subsemigroup of the full transformation semigroup $\mathcal{T}(\mathfrak{X})$ closed under $\circ$ and $+$ that contains $\{\phi_x\}_{x \in X}$ and the zero function given by $0(Y) = \emptyset$ for all $Y \subset X_F$.
The semigroup model is said to be {\it generated} by the set $\{\phi_x\}_{x \in X}$, indicated by the notation
\begin{equation*}
\mathcal{S} = \langle \phi_x \rangle_{x \in X}.
\end{equation*}
\end{defn}

\noindent As a subsemigroup of $\mathcal{T}(\mathfrak{X})$, the semigroup $\mathcal{S}$ is automatically a {\it finite} semigroup.\\
\begin{rmk} \label{rmk:algebra}
 The object $\mathcal{S}$ is called a semigroup model, because $\mathcal{S}$ is a semigroup with respect to both operations $\circ$ and $+$.
 The correct description of $\mathcal{S}$ in terms of universal algebra is an algebra of type $(2,2,0)$, cf. \cite{Almeida1995}.
 However, to avoid confusion with the more commonly used notions of an algebra (e.g. matrix algebras, operator algebras, algebras over rings, etc.), this terminology is not used.
 When referring to the actual semigroups $(\mathcal{S},\circ)$ and $(\mathcal{S},+)$, the term semigroup is used instead of semigroup model. 
\end{rmk}

\begin{rmk} \label{rmk:Rhodes}
 \cite{Rhodes2010} have introduced and studied similar semigroups derived from chemical reaction networks.
 These semigroups were constructed as subsemigroups of the full transformation semigroup on a set of metabolites induced by functions of enyzmes similar to the definition given here.
 However, in their work, the catalysts were not part of the network.
 More importantly, they did not define and use the operation of addition and also did not consider the natural partial order introduced in Remark \ref{rmk:PO}.
 Therefore, most results of this article cannot be obtained with the construction of Rhodes and Nehaniv.
 Moreover, in the definition given here, the chemicals which are not produced by a function $\phi \in \mathcal{S}$ disappear after the application of the function to any $Y \in \mathfrak{X}$, whereas they are retained in the model of Rhodes and Nehaniv.
 Yet, this property is crucial for obtaining the results in Section \ref{sec:RAF}.
\end{rmk}

\begin{rmk} \label{rmk:generators}
Generally, a map $\phi : \mathfrak{X} \rightarrow \mathfrak{X}$ is to be defined on all subsets $Y \subset X_F$, i.e. the assignment $Y \mapsto \phi(Y)$ needs to be given independently for all $Y \subset X_F$.
However, in the case of the constructed semigroup models, all maps $\phi \in \mathcal{S}$ respect the partial order on $\mathfrak{X}$ given by inclusion of sets, i.e.
\begin{equation} \label{eq:generators}
 Z \subset Y \subset X_F \implies \phi(Z) \subset \phi(Y)
\end{equation}
\noindent for all $Y,Z \subset X_F$.
This follows directly from the Definition \ref{def:func}.
Therefore, it is enough to specify any map $\phi \in \mathcal{S}$ on some set $I$ of generating sets $\{Y_i\}_{i \in I}$ with $Y_i \subset X_F$ by explicitly giving $\phi(Y_i)$ for all $i \in I$ and by defining
\begin{equation*}
\phi(Y) =  \bigcup\limits_{Y_i \subset Y} \phi(Y_i)
\end{equation*}
for an arbitrary $Y \subset X_F$.
The $\{Y_i\}_{i \in I}$ and $\{\phi(Y_i)\}_{i \in I}$ are thereby required to satisfy the condition (\ref{eq:generators}).
Usually, the generators $\{Y_i\}_{i \in I}$ will be taken as the sets of non-food reactants involved in a given function $\phi \in \mathcal{S}$.
This is a convenient notational simplification as the state space $\mathfrak{X}$ grows exponentially with the number of chemicals in the network.
Finally, the following notation for constant maps is used throughout the text:
\begin{equation*}
    c_Y : \mathfrak{X} \rightarrow \mathfrak{X}
\end{equation*}
denotes the map $c_Y(Z) = Y$ for all $Z \subset X_F$.
\end{rmk}

\begin{ex} \label{ex:SFood}
As an example, the semigroup model $\mathcal{S}$ for the CRS shown in Fig. \ref{fig:example3} is constructed.
\begin{figure}[htb]
  \centering
  \includegraphics[scale=0.25]{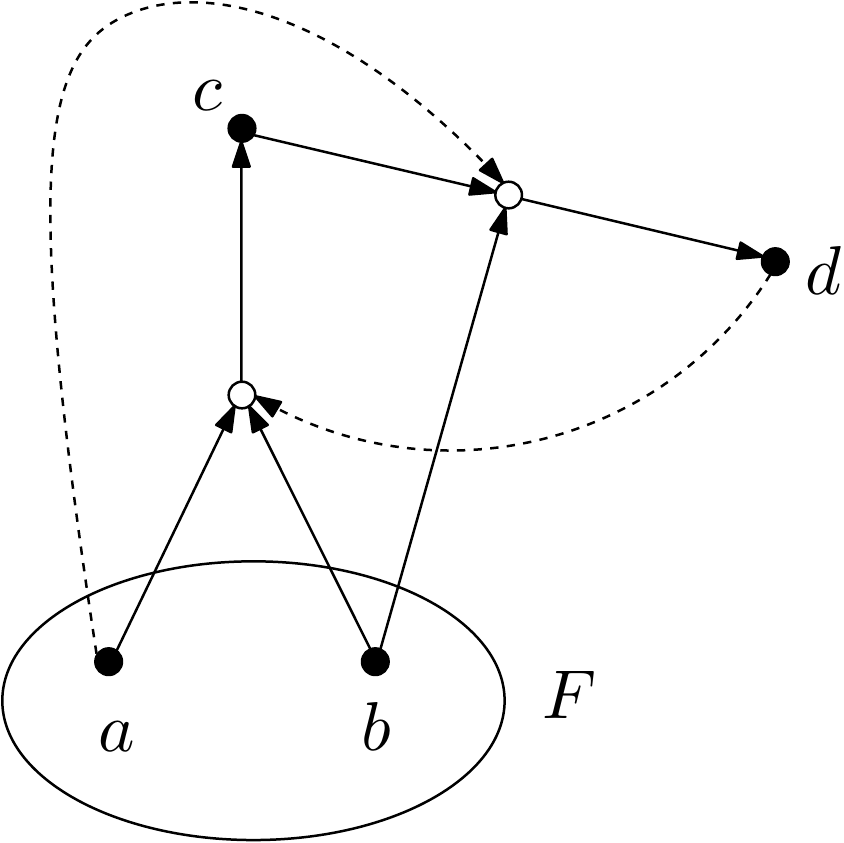}
  \caption{The CRS used to construct the semigroup model $\mathcal{S}$ in Example \ref{ex:SFood}.}
  \label{fig:example3}
\end{figure}

\noindent This model is generated by the functions $\phi_a$ and $\phi_d$, whereby $\phi_a$ has the generating sets $\emptyset$ and $\{c\}$ with $\phi_a(\emptyset) = \emptyset$ and $\phi_a(\{c\}) = \{d\}$ and $\phi_d$ is the constant function $c_{\{c\}}$.
Therefore, the sum $\phi_a + \phi_d$ has two generating sets with $(\phi_a + \phi_d)(\emptyset) = \{c\}$ and $(\phi_a + \phi_d)(\{c\}) = \{c,d\}$.
Two additional functions are generated by the products $\phi_a \circ \phi_d = c_{\{d\}}$ and $(\phi_a + \phi_d)^2 = c_{\{c,d\}}$.
One checks that these functions, together with $0$ are closed under addition and multiplication.
Thus the semigroup model $\mathcal{S}$ is the set
\begin{equation*}
\mathcal{S} = \{0,\phi_a,\phi_d,\phi_a + \phi_d,c_{\{d\}},c_{\{c,d\}}\}
\end{equation*}
equipped with the operations $\circ$ and $+$ given in Tables \ref{table:circ} and \ref{table:+}.
\begin{table} [ht]
    \caption[The operation $\circ$ in $\mathcal{S}$]{The multiplication table for $\mathcal{S}$.
    The order of composition is {\it row} $\circ$ {\it column}.}
    \label{table:circ}
\centering
    \begin{tabular}{|c|ccccc|} 
    \hline
    $\circ$ & $\phi_a$ & $\phi_d$ & $\phi_a + \phi_d$ & $c_{\{d\}}$ & $c_{\{c,d\}}$ \\
	\hline
	$\phi_a$ & 0 & $c_{\{d\}}$ & $c_{\{d\}}$ & 0 & $c_{\{d\}}$\\
	$\phi_d$ & $\phi_d$ & $\phi_d$ & $\phi_d$ & $\phi_d$  & $\phi_d$\\
	$\phi_a + \phi_d$ & $\phi_d$ & $c_{\{c,d\}}$ &  $c_{\{c,d\}}$ & $c_{\{c\}}$  & $c_{\{c,d\}}$\\
	$c_{\{d\}}$ & $c_{\{d\}}$ & $c_{\{d\}}$ & $c_{\{d\}}$ & $c_{\{d\}}$ & $c_{\{d\}}$ \\
	$c_{\{c,d\}}$ & $c_{\{c,d\}}$ & $c_{\{c,d\}}$ & $c_{\{c,d\}}$ & $c_{\{c,d\}}$ & $c_{\{c,d\}}$ \\
	\hline
	\end{tabular}
\end{table}
\begin{table} [ht]
    \caption[The operation $+$ in $\mathcal{S}$]{The addition table for $\mathcal{S}$.
    All functions $\phi$ satisfy $\phi + \phi = \phi$ giving the corresponding elements on the diagonal.
    The commutativity of addition yields the lower left half of the table.}
    \label{table:+}
\centering
    \begin{tabular}{|c|ccccc|} 
    \hline
    + & $\phi_a$ & $\phi_d$ & $\phi_a + \phi_d$ & $c_{\{d\}}$ & $c_{\{c,d\}}$ \\
   	\hline
	$\phi_a$ & & $\phi_a + \phi_d$ & $\phi_a + \phi_d$ & $c_{\{d\}}$ & $c_{\{c,d\}}$\\
	$\phi_d$ & & & $\phi_a + \phi_d$ & $c_{\{c,d\}}$ & $c_{\{c,d\}}$\\
	$\phi_a + \phi_d$ & & & & $c_{\{c,d\}}$ & $c_{\{c,d\}}$\\
	$c_{\{d\}}$ & & & & & $c_{\{c,d\}}$ \\
	$c_{\{c,d\}}$ & & & & & \\
	\hline
	\end{tabular}
\end{table}
\end{ex}

\subsection{Elementary properties} \label{sec:SElementaryProperties}

The interplay of the two operations on a semigroup model $\mathcal{S}$, the partial order on $\mathcal{S}$ and the partial order on the state space $\mathfrak{X}$ are all mutually compatible and the respective elementary properties are established in this section.

\begin{rmk} \label{rmk:PO}
 The partial order on $\mathcal{S}$ is given by $\phi \leq \psi \Leftrightarrow {\phi(Y) \subset \psi(Y)} \text{ for all $Y \subset X_F$}$ for $\phi, \psi \in \mathcal{S}$.
 For $\mathcal{S}$ endowed with this partial order, the notation $(\mathcal{S},\leq)$ is used.
\end{rmk}

\begin{lem} \label{lemma:PO}
Let $\mathcal{S}$ be a semigroup model of a CRS.
The partial order defined above is preserved under the operations $\circ$ and $+$, i.e. for any $\phi, \phi', \psi, \psi' \in \mathcal{S}$ the following relations hold

\begin{align}
\label{eq:multCompatibility}
 \phi \leq \psi \textrm{ and } \phi' \leq \psi' &\Rightarrow \phi \circ \phi' \leq \psi \circ \psi',\\
\label{eq:addCompatibility}
  \phi \leq \psi \textrm{ and } \phi' \leq \psi' &\Rightarrow \phi + \phi' \leq \psi + \psi'.
\end{align}
\end{lem}

\begin{proof}
This follows directly from the relation (\ref{eq:generators}).
\end{proof}

\begin{lem} \label{lemma:POprops}
Let $\mathcal{S}$ be a semigroup model of a CRS. Then the following properties hold true:

\noindent (I) Any $\phi, \psi \in \mathcal{S}$ satisfy 
\begin{equation} \label{eq:order1}
 \phi \leq \phi + \psi.
\end{equation}
(II) Any $\phi, \phi', \psi \in \mathcal{S}$ such that $\phi \leq \psi$ and $\phi' \leq \psi$ satisfy
\begin{equation} \label{eq:order2}
 \phi + \phi' \leq \psi.
\end{equation}
\end{lem}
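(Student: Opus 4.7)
The plan is to reduce both inequalities to elementary set-theoretic facts about the union, applied pointwise at each $Y \subset X$. Since the partial order $\leq$ on $\mathcal{S}$ is defined by $\phi \leq \psi \iff \phi(Y) \subset \psi(Y)$ for every $Y \subset X$, and since the addition $+$ is defined pointwise by $(\phi+\psi)(Y) = \phi(Y) \cup \psi(Y)$, both claims (I) and (II) should follow immediately from the analogous properties of $\cup$ on the power set $\mathfrak{P}(X)$.

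For part (I), I would fix an arbitrary $Y \subset X$ and observe that $\phi(Y) \subset \phi(Y) \cup \psi(Y)$ is a trivial set inclusion. By the definition of $+$ this is precisely $\phi(Y) \subset (\phi+\psi)(Y)$, and since $Y$ was arbitrary the defining condition of $\leq$ gives $\phi \leq \phi+\psi$.

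For part (II), I would again fix $Y \subset X$. The hypotheses $\phi \leq \chi$ and $\psi \leq \chi$ unfold to $\phi(Y) \subset \chi(Y)$ and $\psi(Y) \subset \chi(Y)$, so taking the union of the two left-hand sides yields $\phi(Y) \cup \psi(Y) \subset \chi(Y)$. Rewriting the left-hand side via the definition of $+$ gives $(\phi+\psi)(Y) \subset \chi(Y)$, and because $Y$ was arbitrary we conclude $\phi+\psi \leq \chi$.

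There is essentially no obstacle here: the lemma is the statement that $(\mathcal{S},\leq,+)$ behaves like a join semilattice with $+$ realizing the join, and the proof is a direct consequence of Definition \ref{def:func} (pointwise union) and Remark \ref{rmk:PO} (pointwise inclusion). No use of composition, of the idempotence of $+$, or of the ordering Lemma \ref{lemma:PO} is required. The main thing to be careful about is only that both verifications are genuinely carried out for every $Y \subset X$, not just for a chosen generating family, so that the conclusion matches the definition of $\leq$ as stated in Remark \ref{rmk:PO}.
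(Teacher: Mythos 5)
Your proof is correct and matches the paper's approach: the paper simply asserts that both parts follow directly from the definition of the sum (pointwise union) and the partial order, which is exactly the elementary pointwise verification you carry out. Spelling out the union inclusions for every $Y \subset X$ is all that is needed, and your care to check all $Y$ rather than just generating sets is appropriate.
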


\begin{proof}
Both statements follow from the relation (\ref{eq:addCompatibility}).
To prove (I), one chooses $\phi' = 0$.
The statement (II) follows from the idempotence of addition with $\psi = \psi'$.
\end{proof}

\noindent The operations $\circ$ and $+$ on $\mathcal{S}$ have the following distributivity properties.

\begin{lem} \label{lemma:distr}
Let $\phi, \psi, \chi \in \mathcal{S}$.
Then the following relations hold
\begin{align}
\label{eq:distrRight}
 \phi \circ \chi + \psi \circ \chi &= (\phi + \psi) \circ \chi, \\ 
\label{eq:distrLeft}
 \chi \circ  \phi+   \chi \circ  \psi &\leq  \chi \circ (\phi + \psi).
\end{align}
\end{lem}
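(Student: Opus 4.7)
The plan is to verify both statements by evaluating each side on an arbitrary $Y \subset X$ and comparing the resulting subsets of $X$, using the definition of $+$ as pointwise union and $\circ$ as ordinary composition. The order-preserving property from remark \ref{rmk:generators} will do the rest.

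For the equality (\ref{eq:distrRight}), I would apply the left-hand side to $Y$ to get $\phi(\chi(Y)) \cup \psi(\chi(Y))$ by unfolding the definitions of $+$ and $\circ$. The right-hand side, evaluated at $Y$, gives $(\phi+\psi)(\chi(Y))$, which by definition of $+$ is again $\phi(\chi(Y)) \cup \psi(\chi(Y))$. Since this holds for every $Y \subset X$, the two functions coincide.

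For the inequality (\ref{eq:distrLeft}), I would similarly apply both sides to $Y$. The left-hand side becomes $\chi(\phi(Y)) \cup \chi(\psi(Y))$, while the right-hand side becomes $\chi(\phi(Y) \cup \psi(Y))$. Since $\phi(Y) \subset \phi(Y) \cup \psi(Y)$ and $\psi(Y) \subset \phi(Y) \cup \psi(Y)$, the order-preserving property (\ref{eq:generators}) of elements of $\mathcal{S}$ gives $\chi(\phi(Y)) \subset \chi(\phi(Y) \cup \psi(Y))$ and $\chi(\psi(Y)) \subset \chi(\phi(Y) \cup \psi(Y))$; taking the union yields the desired containment. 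Since this holds for all $Y \subset X$, the inequality follows from the definition of the partial order.

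There is essentially no obstacle here: both statements are structural consequences of the definitions of $+$, $\circ$, and the monotonicity of elements of $\mathcal{S}$. The only conceptual point worth noting in the write-up is \emph{why} (\ref{eq:distrLeft}) is in general a strict inequality rather than an equality: a single $\chi \in \mathcal{S}$ may require substrates drawn jointly from $\phi(Y)$ and $\psi(Y)$, producing outputs on the right-hand side that cannot be obtained by applying $\chi$ to either set separately. This asymmetry between (\ref{eq:distrRight}) and (\ref{eq:distrLeft}) is worth flagging but does not affect the proof itself.
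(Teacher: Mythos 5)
Your proof is correct and follows essentially the same route as the paper: the equality (\ref{eq:distrRight}) is verified by the identical pointwise computation, and your argument for (\ref{eq:distrLeft}) simply unfolds the content of lemma \ref{lemma:PO} and lemma \ref{lemma:POprops} (monotonicity of $\chi$ from remark \ref{rmk:generators} plus the defining properties of $+$) instead of citing them. Your closing observation about strictness matches the remark the paper makes immediately after the lemma, illustrated there by $\phi_a \circ (\phi_e + \phi_f)$ in example \ref{ex:S}\textbf{B}.
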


\begin{proof}
Using the definitions of the operations, one obtains $(\phi \circ \chi + \psi \circ \chi)(Y) = (\phi \circ \chi)(Y) \cup (\psi \circ \chi)(Y) = \phi(\chi(Y)) \cup \psi(\chi(Y)) = (\phi + \psi)(\chi(Y)) = ((\phi + \psi) \circ \chi)(Y)$ for all $Y \subset X$, proving the equality (\ref{eq:distrRight}).

Lemma \ref{lemma:PO} and Lemma \ref{lemma:POprops}(I) imply $ \chi \circ \phi \leq \chi \circ (\phi + \psi)$ and $\chi \circ \psi \leq \chi \circ (\phi + \psi)$. 
The relation (\ref{eq:distrLeft}) then follows from Lemma \ref{lemma:POprops}(II).
\end{proof}

\begin{rmk}
The inequality in (\ref{eq:distrLeft}) can be strict.
In the example shown in Fig. \ref{fig:example4}, the function $\phi_a \circ (\phi_e + \phi_f)$ is non-zero, while $\phi_a \circ \phi_e + \phi_a \circ \phi_f$ is the zero function.
Interestingly, the property (\ref{eq:distrLeft}) reads: {\it ''The result of applying a test function $\chi$ to the sum of two functions $\phi$ and $\psi$ can be larger than the result of applying the test function to the individual functions and then taking the sum.''}
This is reminiscent of a characterization of emergence, which is often stated as {\it ''the whole is larger than the sum of its parts''}.
 \begin{figure}[htb]
  \centering
  \includegraphics[scale=0.25]{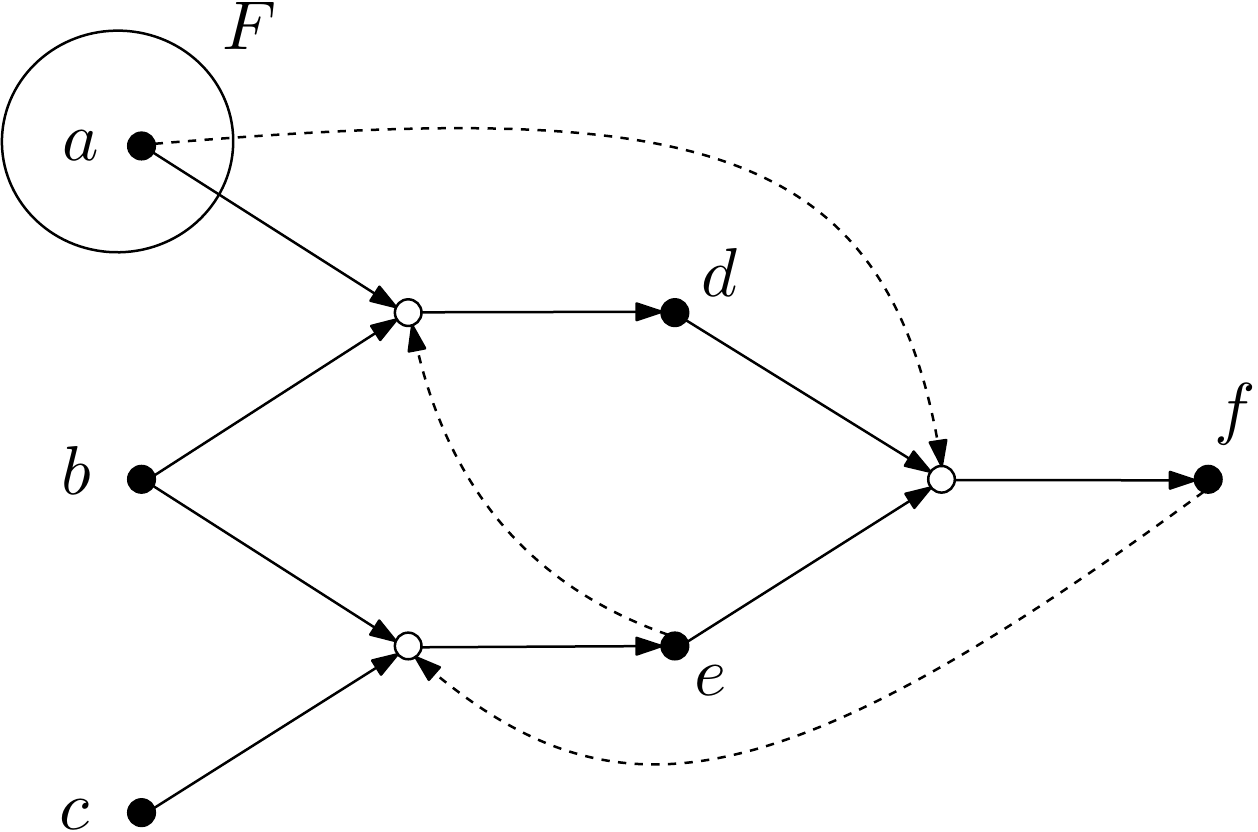}
  \caption{The semigroup model of the shown CRS has the property $\phi_a \circ (\phi_e + \phi_f) > \phi_a \circ \phi_e + \phi_a \circ \phi_f$.}
  \label{fig:example4}
\end{figure}
\end{rmk}

By definition, the semigroup $\mathcal{S}$ captures all possibilities of joint and subsequent functions of chemicals on the chemical state space $\mathfrak{X}$.
In particular, this allows to determine the functions of arbitrary subsets $Y \subset X_F$ as follows.

\begin{defn} \label{def:function}
For any $Y \subset X_F$, the semigroup model $\mathcal{S}(Y) < \mathcal{S}$ of the functions of $Y$ is defined as\footnote{Due to the constant presence of the food set, the functions of the chemicals in $F$ are included in the set of generators for $\mathcal{S}(Y)$.}

\begin{equation*}
\mathcal{S}(Y) = \langle \phi_x \rangle_{x \in Y \cup F}
\end{equation*}

\noindent and the {\it function} $\Phi_Y$ of $Y$ is given by

\begin{equation*} \label{eq:PhiY}
\Phi_{Y} = \sum_{\phi \in \mathcal{S}(Y)} \phi.
\end{equation*}

\end{defn}

The subsemigroup model $\mathcal{S}(Y)$ encodes all possible catalytic fuctionality that is supported on the set $Y \cup F \subset X$.
The function $\Phi_Y$ is characterized by the following property.

\begin{prop} \label{prop:function}
$\Phi_Y$ is the unique maximal element of $\mathcal{S}(Y)$.
\end{prop}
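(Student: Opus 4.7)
The plan is to show three things in sequence: that $\Phi_Y$ is a well-defined element of $\mathcal{S}(Y)$, that it dominates every element of $\mathcal{S}(Y)$ in the partial order, and that any element with this dominating property must equal $\Phi_Y$.

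First I would observe that $\mathcal{S}(Y)$ is a finite set, because it sits inside $\mathcal{S} \leq \mathcal{T}(\mathfrak{X})$, which is finite as $\mathfrak{X}$ is finite. Moreover, by construction $\mathcal{S}(Y) = \langle \phi_x \rangle_{x \in Y}$ is closed under $+$. Hence the sum $\Phi_Y = \sum_{\phi \in \mathcal{S}(Y)} \phi$ ranges over finitely many summands and, by induction using the closure of $\mathcal{S}(Y)$ under $+$, lies in $\mathcal{S}(Y)$.

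Next I would establish that $\phi \leq \Phi_Y$ for every $\phi \in \mathcal{S}(Y)$. For any fixed $\phi \in \mathcal{S}(Y)$, enumerate $\mathcal{S}(Y) \setminus \{\phi\} = \{\psi_1, \ldots, \psi_k\}$ and note that $\Phi_Y = \phi + \psi_1 + \cdots + \psi_k$. A single application of Lemma~\ref{lemma:POprops}(I) (with $\phi$ playing its own role and $\psi = \psi_1 + \cdots + \psi_k$, which is again in $\mathcal{S}(Y)$ by closure under $+$) gives $\phi \leq \phi + (\psi_1 + \cdots + \psi_k) = \Phi_Y$. Thus $\Phi_Y$ is an upper bound for $\mathcal{S}(Y)$, and in particular is a maximal (indeed, maximum) element.

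Finally, uniqueness is immediate: if $\Phi' \in \mathcal{S}(Y)$ is any maximal element, then applying the previous paragraph to $\phi = \Phi'$ yields $\Phi' \leq \Phi_Y$, so by maximality of $\Phi'$ we must have $\Phi' = \Phi_Y$ (using antisymmetry of $\leq$). There is no real obstacle here; the proof is essentially a bookkeeping exercise once one notices that $\mathcal{S}(Y)$ is finite and closed under $+$, so the only step worth stating carefully is the reduction of $\phi \leq \Phi_Y$ to a single instance of Lemma~\ref{lemma:POprops}(I).
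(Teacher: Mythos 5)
Your proof is correct and follows essentially the same route as the paper: decompose $\Phi_Y$ as $\phi$ plus the sum of the remaining elements of $\mathcal{S}(Y)$ and apply Lemma~\ref{lemma:POprops}(I) to conclude $\phi \leq \Phi_Y$ for every $\phi \in \mathcal{S}(Y)$. The extra points you make explicit (finiteness and closure under $+$ guaranteeing $\Phi_Y \in \mathcal{S}(Y)$, and antisymmetry giving uniqueness) are left implicit in the paper but are the same underlying argument.
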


\begin{proof}
It suffices to show that any element $\psi \in \mathcal{S}(Y)$ satisfies $\psi \leq \Phi_{Y}$.
But this is a direct consequence of Lemma \ref{lemma:POprops}(I) as $\Phi_{Y} = \psi + \sum_{\phi \in \mathcal{S}(Y) \setminus \{\psi\}} \phi$ by construction.
\end{proof}

\begin{rmk} \label{rmk:POtranssitivity}
 If $Z \subset Y \subset X_F$, then Lemma \ref{lemma:POprops}(I) implies $\Phi_Z \leq \Phi_Y$.
 In particular, $\mathcal{S}$ has a unique maximal element, given by $\Phi_{X_F}$.
\end{rmk}

\begin{rmk} \label{rmk:subsemigroup}
Any subCRS $(X',R',C',F)$ has a semigroup model given by Definition \ref{def:semigroupModel}.
This model is denoted by $\mathcal{S}'(X',R')$.
It is a subsemigroup of the full transformation semigroup $\mathcal{T}(\mathcal{P}(X'_F))$ on the power set of $X'_F$.
A function $\phi \in \mathcal{S}'(X',R')$ can be extended to a function $ext(\phi) \in \mathcal{T}(\mathfrak{X})$ by defining 
\begin{equation*}
    ext(\phi)(Y) = \phi(Y \cap X_F')
\end{equation*}
for all $Y \subset X_F$.
This gives a homomorphic embedding of $\mathcal{S}'(X',R')$ into $\mathcal{T}(\mathfrak{X})$, but $ext(\mathcal{S}'(X',R'))$ is neither a subsemigroup of $\mathcal{S}(X'_F)$ nor of $\mathcal{S}$.

Because the generators $\{\phi'_x\}_{x \in X'} \subset \mathcal{T}(\mathcal{P}(X'_F))$ of $\mathcal{S}'(X',R')$ and the generators $\{\phi_x\}_{x \in X'} \subset \mathcal{T}(\mathfrak{X})$ of $\mathcal{S}(X'_F)$ satisfy $ext(\phi'_x) \leq \phi_x$ for all $x \in X'$, the the inequality
\begin{align} \label{eq:maxInequality}
    ext(\Phi'_{X'_F}) \leq \Phi_{X'_F}
\end{align}
for the maximal functions $\Phi'_{X'_F}$ and $\Phi_{X'_F}$ of $\mathcal{S}'(X',R')$ and $\mathcal{S}(X'_F)$ follows from Lemma \ref{lemma:PO}.
\end{rmk}

\section{Characterization of self-sustaining CRS} \label{sec:RAF}

In this section, it is illustrated how the semigroup model of a CRS provides a natural language to formulate and prove statements regarding CRS in a concise manner.
The specific focus here lies on the application to self-sustaining CRS.
In Section \ref{sec:PRAF}, it is shown that a CRS is self-sustaining if and only if its function $\Phi_{X_F}$ satisfies $\Phi_{X_F}(X_F) = X_F$ (Theorem \ref{thm:PRAF}) and it follows that any self-sustatining set of chemicals $X_F' \subset X_F$ satisfies $X'_F \subset \Phi_{X'_F}(X'_F)$ (Corollary \ref{corr:subPRAF}).
Moreover, the condition $X'_F = \Phi_{X'_F}(X'_F)$ is sufficient for $X'_F$ to be a self-sustaining set of chemicals (Proposition \ref{prop:PRAFsuff}).
This condition is, however, not necessary.
In Section \ref{sec:dynamics}, a discrete dynamics is introduced by considering the repeated action of a set of chemicals, and the elementary properties of the dynamics are established.
In Section \ref{sec:maxsubPRAF} it is proven that the maximal self-sustaining set of chemicals of a CRS is the fixed point of the dynamics with the initial condition given by the presence of all chemicals of the CRS (Theorem \ref{thm:maxPRAF}).
As a corollary, it follows that a CRS with a nilpotent semigroup model cannot have nontrival self-sustaining subCRS.
This results establishes a link between the combinatorial theory of finite semigroups and the the theory of self-sustaining chemical reaction networks.
Finally, a self-sustaining set of chemicals $X_F'$ is in general not stable under the dynamics but converges to a fixed point $X_F'^{*s}$ which contains $X_F'$.
This fixed point is a self-sustaining set of chemicals itself and is termed the {\it functional closure} of $X_F'$.
The importance of functionally closed sets of chemicals is discussed and a structure theorem on the lattice of all functionally closed sets of chemicals is proven (Theorem \ref{thm:M}).

Throughout this section, fix a CRS  $(X,R,C,F)$ and let $\mathcal{S}$ be its semigroup model.

\subsection{Characterization of self-sustaining CRS} \label{sec:PRAF}

Recall that the self-sustainment property of a CRS in Definition \ref{def:PRAF} requires all chemicals in $X_F$ to be produced from chemicals in $X$ by catalyzed reactions.
Putting this description into mathematical language, one would expect a self-sustaining CRS to satisfy $\Phi_{X_F}(X_F) = X_F$.
This is precisely the statement of the following theorem.

\begin{thm} \label{thm:PRAF}
 A CRS is self-sustaining if and only if its maximal function $\Phi_{X_F}$ satisfies
 \begin{equation*}
     \Phi_{X_F}(X_F) = X_F.
 \end{equation*}
\end{thm}

\begin{proof}
If the CRS is self-sustaining, then there must be a set of reactions $R' \subset \pi_R(C)$ such that $\textrm{ran}(R') = X_F$.
For each reaction $r \in R'$, choose a catalyst $y(r) \in X$.
Then the function
\begin{equation*}
    \phi:= \sum_{r \in R'} \phi_{y(r)}
\end{equation*}
is an element of $\mathcal{S}$ and satisfies $\phi(X_F) = X_F$.
The maximal function $\Phi_{X_F}$ is bounded from below by $\phi$ and therefore also satisfies $\Phi_{X_F}(X_F) = X_F$.

To prove the reverse, assume that $\Phi_{X_F}(X_F) = X_F$ holds.
The function $\Phi_{X_F}$ must be of the form
\begin{equation*}
    \Phi_{X_F} = \sum_{y \in Y} \phi_y \circ \psi_y
\end{equation*}
for some subset of generating functions $\{\phi_y\}_{y \in Y}$ with $Y \subset X$ and for some functions $\psi_y \in \mathcal{S}$\footnote{Note that, {\it a priori}, $Y$ should be a multiset with elements in $X$ and the argument of the proof would go through unaltered if $Y$ was a multiset.
However, using the relation (\ref{eq:distrLeft}), i.e. $\phi_y \circ \psi_y + \phi_y \circ \psi'_y \leq \phi_y \circ (\psi_y + \psi'_y)$, and the maximality of $\Phi_{X_F}$, one sees that the given representation of $\Phi_{X_F}$ with a set $Y$ is actually correct.}.
The function
\begin{equation*}
    \psi := \sum_{y \in Y} \phi_y    
\end{equation*}
satisfies $\psi(X_F) = X_F$, as otherwise $\Phi_{X_F}$ could not have $X_F$ in its image.
Thus, the set of reactions
\begin{equation*}
    R' = \{ r \in R \textrm{ such that }(y,r) \in C \textrm{ for some }y \in Y\}
\end{equation*}
satisfies the condition \ref{cond:PR} as the condition $\textrm{dom}(R') \subset \textrm{ran}(R') \cup F = X$ is trivially satisfied.
\end{proof}

\begin{corr} \label{corr:subPRAF}
 If $X'_F \subset X_F$ is a self-sustaining set of chemicals, then the inclusion $X'_F \subset \Phi_{X'_F}(X'_F)$ holds.
\end{corr}
\begin{proof}
Let $\Phi'_{X'_F}$ be the maximal function of the semigroup model $\mathcal{S}'(X',R\mid_{X'})$ for the subCRS generated by $X'_F$.
By Theorem \ref{thm:PRAF}, the equality $\Phi'_{X'_F}(X'_F) = X'_F$ holds.
Moreover, let $\Phi_{X'_F}$ be the maximal function of the subsemigroup model $\mathcal{S}(X'_F) < \mathcal{S}$.
Then the inequality (\ref{eq:maxInequality}) yields the desired inclusion
\begin{align*}
    X'_F = ext(\Phi'_{X'_F})(X'_F) \subset \Phi_{X'_F}(X'_F).
\end{align*}
\end{proof}

The inclusion $X'_F \subset \Phi_{X'_F}(X'_F)$ proven in the above corollary can be strict in the case that $X'$ contains catalysts that catalyze reactions with products outside of $X'$ but with reactants in $X'$.
Therefore, the equality $X'_F = \Phi_{X'_F}(X'_F)$ cannot be a necessary condition for self-sustaining sets of chemicals.
It is, however, a sufficient one:

\begin{prop} \label{prop:PRAFsuff}
If a set of chemicals $X'_F \subset X_F$ satisfies the equality $X'_F = \Phi_{X'_F}(X'_F)$, then it is a self-sustaining set of chemicals.
\end{prop}

\begin{proof}
The function $\Phi_{X'_F} \in \mathcal{S}(X'_F)$ is of the form
\begin{align*}
    \Phi_{X'_F} &= \sum_{y \in Y} \phi_y \circ \psi_y \\
        &= \sum_{y \in Y} \sum_{(y,r) \in C} \phi_r \circ \psi_y 
\end{align*}
for some subset of generating functions $\{\phi_y\}_{y \in Y}$ with $Y \subset X'$ and for some ${\psi_y \in \mathcal{S}(X'_F)}$.
Define the function $\psi := \sum_{y \in Y} \psi_y$.
The function $\Phi_{X'_F}$ is bounded above by $ (\sum_{y \in Y} \phi_y) \circ \psi$, but due to its maximality in $\mathcal{S}(X'_F)$ it must be equal to this function, i.e. $\Phi_{X'_F} = (\sum_{y \in Y} \phi_y) \circ \psi$ with $\psi \in \mathcal{S}(X'_F)$, or, equivalently
\begin{equation} \label{eq:PhiXF}
    \Phi_{X'_F} = \sum_{y \in Y} \sum_{(y,r) \in C} \phi_r \circ \psi. 
\end{equation}
Moreover, the relation $\psi \leq \Phi_{X'_F}$ holds, again due to the maximality of $\Phi_{X'_F}$, and this implies $\psi(X'_F) \subset X'_F$.
This means that in the calculation of $\Phi_{X'_F}(X'_F)$ only those $\phi_r$ in the representation (\ref{eq:PhiXF}) with $\textrm{dom}(r) \subset \psi(X'_F) \cup F \subset X'$ play a role.
In other words, choosing the set of reactions $R'$ as
\begin{equation*}
    R' := \left\{ r \in R \textrm{ such that } (y,r) \in C \textrm{ for some }y \in Y \textrm{ and }\textrm{dom}(r) \subset X' \right\},
\end{equation*}
one can write $X'_F = \Phi_{X'_F}(X'_F)$ as
\begin{equation}
    X'_F = \Phi_{X'_F}(X'_F) = \left(\sum_{r \in R'} \phi_r \circ \psi\right)\left(X'_F\right) = \bigcup_{r \in R'} \phi_r(\psi(X'_F)).
\end{equation}
This means that $X'_F = \textrm{ran}(R')$.
Additionaly, $\textrm{dom}(R') \subset X'$ holds by definition, and thus the set $R'$ satisfies the condition \ref{cond:PR}.
As $R' \subset R \mid_{X'}$, this implies that $X'_F$ is a self-sustaining set of chemicals.
\end{proof}

\begin{rmk}
Although the statement of Proposition \ref{prop:PRAFsuff} appears trivial at first glance, one needs to take care of the fact that $\Phi_{X'_F}$ is a function on the power set of $X_F$ and not on the power set of $X'_F$.
Therefore, it could potentially include the formation of chemicals in $X'_F$ via reaction pathways which include intermediate chemicals outside of $X'$.
The essence of the proof is to show, based on the maximality of $\Phi_{X'_F}$ in $\mathcal{S}(X'_F)$, that such pathways do not exist under the condition $X'_F = \Phi_{X'_F}(X'_F)$.
\end{rmk}

\subsection{Sustaining dynamics on a Semigroup Model} \label{sec:dynamics}

There is natural discrete dynamics on a CRS based on its semigroup model.
Starting with any set of chemicals $Y_0 \subset X_F$, the a maximal function $\Phi_{Y_0}$ (cf. Definition \ref{def:function}) acts on the set itself.
This yields the maximal set $Y_1=\Phi_{Y_0}(Y_0)$ that can be produced from $Y_0$ and the food set using the functionality supported on $Y_0$ and the food set.
This argument applies iteratively and gives rise to the {\it sustaining dynamics} on a CRS.

\begin{defn}
The {\it sustaining dynamics} of a CRS with the initial condition $Y_0 \subset X_F$ is generated by the propagator
\begin{align*}
\mathcal{D}^s : \mathfrak{X} &\rightarrow \mathfrak{X}\\
Y &\mapsto \Phi_Y(Y),
\end{align*}
where $\Phi_Y$ is the function of $Y \subset X_F$.
Analogously, the dynamics is parametrized by $\mathbb{Z}_{\geq 0}$ as
\begin{equation*}
Y_{n+1} = \Phi_{Y_n}(Y_n) \text{ for all } n \in \mathbb{Z}_{\geq 0}.
\end{equation*}
\end{defn}

\noindent Note that the propagator $\mathcal{D}^s$ deletes all elements that are in $Y$ but not in $\Phi_Y(Y)$.

\begin{rmk} \label{rmk:period}
Because the state space $\mathfrak{X}$ is finite, for the sequence $(Y_n)_{n \in \mathbb{Z}_{\geq 0}}$ generated by $\mathcal{D}^s$ under the initial condition $Y_0$, there exist minimal nonnegative integers $k$ and $m \neq 0$ such that $Y_k = Y_{k+m}$.
This gives rise to periodic behavior, i.e. $Y_{k+i} = Y_{k+i+nm}$ for all $i=0,...,m-1$ and all $n \in \mathbb{N}$.
If $m=1$, then $Y_k$ is a fixed point and the dynamics is said to {\it stabilize} at $Y_k$.
A fixed point which results from the initial condition $Y_0$ will be denoted by $Y_0^{*s}$.
If $m>1$, the dynamics has period $m$ and is {\it oscillatory}.
Both behaviors are possible in CRS.
\end{rmk}

\begin{rmk}
According to Theorem \ref{thm:PRAF}, if the CRS is self-sustaining, then $X_F$ is a fixed point for the dynamics with initial condition $Y_0=X_F$.
\end{rmk}

\begin{ex} \label{ex:period}
Fig. \ref{fig:circDynamics} shows a CRS with $X=\{a,b,c\}$, food set $F=\emptyset$ and the respective reactions shown in the Figure.
If the initial condition $Y_0$ is a proper subset of $X_F$, the dynamics has period 3.
For example, the dynamics with the initial condition $Y_0=\{a\}$ is
\begin{equation*}
\{a\} \mapsto \{b\} \mapsto \{c\} \mapsto \{a\} \mapsto ...
\end{equation*}
\begin{figure}[htb]
  \centering
  \includegraphics[scale=0.25]{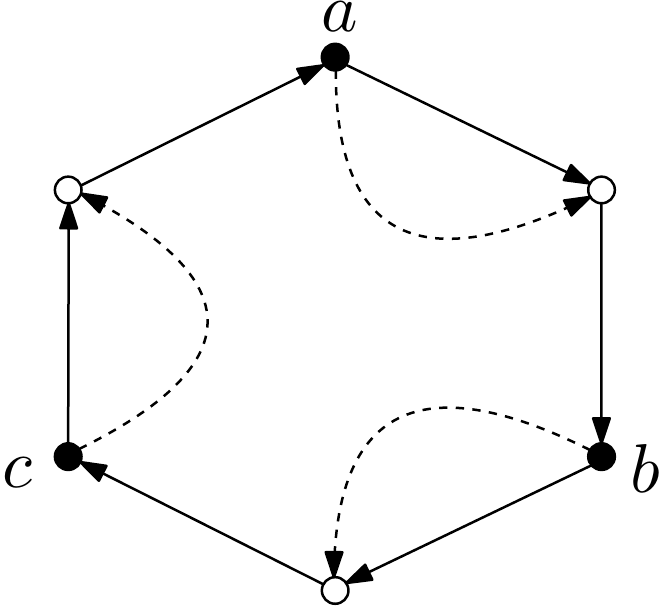}
  \caption{Example of a CRS with possible oscillatory dynamics.
  The food set is the empty set.}
  \label{fig:circDynamics}
\end{figure}
\end{ex}

The elementary properties of the sustaining dynamics are described in the following propositions.

\begin{prop}
Let $(Y_n)_{n \in \mathbb{Z}_{\geq 0}}$ be the sustaining dynamics with initial condition $Y_0$. 
If the semigroup $(\mathcal{S},\circ)$ of the CRS is nilpotent, then the dynamics stabilizes at $\emptyset$, i.e. there exists a natural number $N$ such that
\begin{equation*}
Y_n = \emptyset \text{ for all } n \geq N.
\end{equation*}
\end{prop}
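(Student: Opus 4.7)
The plan is to dominate the dynamics by iterates of the single maximal function $\Phi_{X_F} \in \mathcal{S}_F$ and then invoke nilpotency. The key observation is that although the propagator $\mathcal{D}$ uses a different function $\Phi_{Y_n}$ at each step, all of these functions are bounded above by $\Phi_{X_F}$ in the partial order of $\mathcal{S}_F$.

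First, I would record the comparison $\Phi_{Y} \leq \Phi_{X_F}$ for every $Y \subset X_F$: this is immediate from Remark \ref{rmk:POtranssitivity} since $Y \subset X_F$. Next, I would prove by induction on $n$ that
\begin{equation*}
Y_n \subset \Phi_{X_F}^n(X_F).
\end{equation*}
The base case is $Y_0 \subset X_F = \Phi_{X_F}^0(X_F)$. For the inductive step, using $\Phi_{Y_n} \leq \Phi_{X_F}$ together with the monotonicity of functions in $\mathcal{S}_F$ with respect to set inclusion (Remark \ref{rmk:generators}), one gets
\begin{equation*}
Y_{n+1} = \Phi_{Y_n}(Y_n) \subset \Phi_{X_F}(Y_n) \subset \Phi_{X_F}\bigl(\Phi_{X_F}^n(X_F)\bigr) = \Phi_{X_F}^{n+1}(X_F),
\end{equation*}
closing the induction.

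Now I would invoke nilpotency: there exists $N \in \mathbb{N}$ with $\mathcal{S}_F^N = \{0\}$, and since $\Phi_{X_F} \in \mathcal{S}_F$, its $N$-fold composition $\Phi_{X_F}^N$ is the zero function. Hence $Y_N \subset \Phi_{X_F}^N(X_F) = \emptyset$, so $Y_N = \emptyset$. Finally, once $Y_n = \emptyset$ one has $\mathcal{S}_F(\emptyset) = \langle \phi_x \rangle_{x \in \emptyset} = \{0\}$, so $\Phi_\emptyset = 0$ and therefore $Y_{n+1} = 0(\emptyset) = \emptyset$; the fixed point $\emptyset$ is absorbing, which yields $Y_n = \emptyset$ for all $n \geq N$.

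There is no single hard step here: the argument is essentially a domination principle. The only subtlety worth highlighting in the write-up is that the comparison $\Phi_{Y_n} \leq \Phi_{X_F}$ transfers from the semigroup-theoretic partial order to an actual set inclusion of iterated images, which requires simultaneously using the left-monotonicity of composition (Lemma \ref{lemma:PO}, equation (\ref{eq:PORight})) and the set-inclusion monotonicity of elements of $\mathcal{S}_F$ (Remark \ref{rmk:generators}).
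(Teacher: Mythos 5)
Your argument is correct, but it takes a different route from the paper's. The paper's proof is more direct: it simply observes that $Y_n = \Phi_{Y_{n-1}} \circ \Phi_{Y_{n-2}} \circ \cdots \circ \Phi_{Y_0}(Y_0)$, so $Y_n$ is the image of $Y_0$ under a composite of $n$ elements of $\mathcal{S}_F$; nilpotency ($\mathcal{S}_F^N = \{0\}$, hence $\mathcal{S}_F^n = \{0\}$ for all $n \geq N$, since $0 \circ \psi = 0$) then gives $Y_n = \emptyset$ for $n \geq N$ with no appeal to the partial order at all. You instead dominate the trajectory by powers of the single maximal element, proving $Y_n \subset \Phi_{X_F}^n(X_F)$ via $\Phi_{Y_n} \leq \Phi_{X_F}$ (remark \ref{rmk:POtranssitivity}) and the set-inclusion monotonicity of elements of $\mathcal{S}_F$ (remark \ref{rmk:generators}), and then apply nilpotency only to $\Phi_{X_F}^N \in \mathcal{S}_F^N$. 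This is valid, including your closing observation that $\emptyset$ is absorbing because $\mathcal{S}_F(\emptyset) = \{0\}$ (strictly, generated by the $F$-modifications $(\phi_x)_F$ with $x$ ranging over the empty set, together with the zero function), though that last step is dispensable: your induction already yields $Y_n \subset \Phi_{X_F}^n(X_F) = \emptyset$ for every $n \geq N$, since $\Phi_{X_F}^n \in \mathcal{S}_F^n = \{0\}$. What the paper's route buys is economy --- it needs neither $\Phi_{X_F}$, nor the order-theoretic lemmas, nor a separate absorption argument; what your route buys is a reusable domination principle (every trajectory is bounded above by the iterates of the maximal function), which is essentially the same order-plus-monotonicity comparison the paper deploys later in proposition \ref{prop:fixedPoint}.
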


\begin{proof}
By definition, $Y_n = \Phi_{Y_{n-1}} \circ \Phi_{Y_{n-2}} \circ ... \circ \Phi_{Y_0}(Y_0)$ holds.
Because $\mathcal{S}$ is nilpotent, there exists an index $N$ such that $\mathcal{S}^N = \{0\}$.
This implies that $Y_n = \emptyset$ for all $n \geq N$.
\end{proof}

A useful result is that the dynamics with initial condition $X_F$ cannot have periodic behavior, but always has a fixed point.
It is a consequence of the following more general proposition.

\begin{prop} \label{prop:fixedPoint}
Let $(Y_n)_{n \in \mathbb{Z}_{\geq 0}}$ be the sustaining dynamics such that $Y_1 \subset Y_0$.
Then
\begin{align*}
Y_{n+1} \subset Y_n
\end{align*}
holds for all $n \in \mathbb{Z}_{\geq 0}$ and the dynamics stabilizes at the fixed point $Y_0^{*s}$.
\end{prop}

\begin{proof}
The proof proceeds by induction. 
By hypothesis $Y_1 \subset Y_0$ is satisfied.
Let $Y_n \subset Y_{n-1}$.
This implies the ordering of the respective functions $\Phi_{Y_n} \leq \Phi_{Y_{n-1}}$ by Remark \ref{rmk:POtranssitivity}.
Together with Remark \ref{rmk:generators} this yields
\begin{equation*}
Y_{n+1} = \Phi_{Y_n}(Y_n) \subset \Phi_{Y_{n-1}}(Y_n) \subset \Phi_{Y_{n-1}}(Y_{n-1}) = Y_n.
\end{equation*}
Thus the dynamics is given by a descending chain of sets 
\begin{equation*}
    Y_0 \supset Y_1 \supset ... \supset Y_n \supset Y_{n+1} \dots
\end{equation*}
Because $X_F$ is finite, the chain stabilizes.
\end{proof}

\begin{corr} \label{corr:stabDyn}
A dynamics $(Y_n)_{n \in \mathbb{Z}_{\geq 0}}$ with initial condition $Y_0 = X_F$ always leads to a fixed point.
\end{corr}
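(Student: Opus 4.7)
The plan is to reduce this statement directly to the preceding corollary, which asserts that whenever $Y_1 \subset Y_0$ the dynamics $(Y_n)_{n \in \mathbb{N}}$ forms a descending chain and hence stabilizes. So the only thing I need to verify is the single containment $Y_1 \subset Y_0$ when $Y_0 = X_F$.

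For this, I would first observe that $\Phi_{X_F}$ is an element of $\mathcal{S}_F$, which by construction is a subsemigroup of the full transformation semigroup $\mathcal{T}(\mathfrak{X}_F)$. In particular, every $\phi \in \mathcal{S}_F$ is a map $\mathfrak{X}_F \to \mathfrak{X}_F$, so its image on any argument is automatically a subset of $X_F$. Applying this to $\Phi_{X_F}$ and the argument $X_F$ gives
\begin{equation*}
Y_1 \;=\; \Phi_{Y_0}(Y_0) \;=\; \Phi_{X_F}(X_F) \;\subset\; X_F \;=\; Y_0.
\end{equation*}

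With this containment in hand, the hypothesis of the previous corollary is met, and the dynamics $(Y_n)_{n \in \mathbb{N}}$ is therefore a descending chain of subsets of the finite set $X_F$, which must stabilize at some fixed point. This fixed point is what the introduction denoted $X_F^*$.

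There is essentially no obstacle here: the argument is a one-line invocation of the prior corollary together with the tautological observation that $\Phi_{X_F}$ takes values in $\mathfrak{X}_F$. The only point worth being slightly careful about is to note that the containment $Y_1 \subset Y_0$ is not a hypothesis that has to be verified combinatorially for each CRS, but holds automatically from the codomain of $\Phi_{X_F}$, since $X_F$ is the maximal element of $\mathfrak{X}_F$ under inclusion.
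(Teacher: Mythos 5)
Your proposal is correct and follows exactly the paper's route: the paper's proof is the one-liner that $\Phi_{X_F}(X_F) \subset X_F$ together with the preceding corollary, and your tautological justification of that containment (the codomain of $\Phi_{X_F}$ is $\mathfrak{X}_F$, so its value on any argument is a subset of $X_F$) is precisely the reason the paper leaves it unstated.
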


\begin{proof}
This follows from $\Phi_{X_F}(X_F) \subset X_F$ and the previous proposition.
\end{proof}

\subsection{Identification of self-sustaining sets of chemicals} \label{sec:maxsubPRAF}

In this section, the main theorem concerning the maximal self-sustaining set of chemicals of a CRS is stated and its biological significance is discussed.
Its proof is merely a combination of the previously proven results.

The following lemma is required to ensure that the sustaining dynamics is well-behaved with respect to self-sustaining sets of chemicals.

\begin{lem} \label{lem:PRAFinclusion}
 If $X'_F \subset X_F$ is a self-sustaining set of chemicals and if $Y$ is a set that satisfies $X'_F \subset Y \subset X_F$, then the inclusion
 \begin{equation*}
     X'_F \subset \Phi_Y(Y)
 \end{equation*}
 holds.
\end{lem}
\begin{proof}
The chain of inclusions
\begin{equation*}
    X'_F \subset \Phi_{X'_F}(X'_F) \subset \Phi_Y(X'_F) \subset \Phi_Y(Y)
\end{equation*}
follows from the Corollary \ref{corr:subPRAF} and the Remarks \ref{rmk:generators} and \ref{rmk:POtranssitivity}.
\end{proof}

Now the main theorem can be proven:

\begin{thm}[On the maximal self-sustaining set of chemicals] \label{thm:maxPRAF}
For any CRS, the maximal self-sustaining set of chemicals is the fixed point of the sustaining dynamics $(Y_n)_{n \in \mathbb{Z}_{\geq 0}}$ with the initial condition $Y_0 = X_F$, i.e. it is the set $X_F^{*s}$.
\end{thm}

\begin{proof}
By Corollary \ref{corr:stabDyn}, the dynamics stabilizes at a the fixed point $X_F^{*s}$.
Being a fixed point, the set $X_F^{*s}$ satisfies
\begin{equation*}
    X_F^{*s} = \Phi_{X_F^{*s}}(X_F^{*s})
\end{equation*}
and is thus a self-sustaining set of chemicals by Proposition \ref{prop:PRAFsuff}.
Let $X'_F \subset X_F$  be any self-sustaining set of chemicals.
By inductively applying the Lemma \ref{lem:PRAFinclusion} to the sets $(Y_n)_{n \in \mathbb{Z}_{\geq 0}}$, it follows that $X'_F \subset X_F^{*s}$, which yields the maximality of $X_F^{*s}$.
\end{proof}

\begin{rmk}
The result of this theorem is closely related to an algorithm to determine the maximal pseudo-RAF set of reactions introduced by \cite{Hordijk2015}.
Therein, a dynamics $\{R_n\}_{n \in \mathbb{Z}_{\geq 0}}$ on the power set $\mathcal{P}(R)$ of reactions with the initial condition $R_0 = R$ is used.
The propagator $R_n \mapsto R_{n+1}$ is defined by the operation
\begin{align*}
    R_{n+1} = \{ r \in R_n \textrm{ such that $r$ has a catalyst in }\textrm{ran}(R_{n}) \cup F\}.
\end{align*}
This is equivalent to the dynamics $Y \mapsto \Phi_Y(Y)$ with the initial condition ${Y_0 = X_F}$ via the mutually inverse mapping $Y_n = \textrm{supp}(R_n)$ and ${R_n = \pi_{R\mid_{Y_n \cup F}}(C\mid_{Y_n \cup F})}$.
\end{rmk}

\noindent Theorem \ref{thm:maxPRAF} gives to the following corollary.

\begin{corr} \label{corr:nilpotentPRAF}
A CRS with a nilpotent semigroup $(\mathcal{S},\circ)$ has no nontrivial self-sustaining subCRS.
\end{corr}

\begin{proof}
If there was a nontrivial self-sustaining subCRS and thus a nonempty self-sustaining set of chemicals, there would be a maximal self-sustaining set of chemicals $X_F^{*s} \neq \emptyset$, which satisfies $X_F^{*s} = \Phi_{X_F^{*s}}(X_F^{*s})$.
Any power $\Phi_{X_F^{*s}}^n$ of the function $\Phi_{X_F^{*s}}$ applied to $X_F^{*s}$ yields $X_F^{*s}$ and is therefore nonzero.
\end{proof}

This corollary has importance when studying the semigroups that arise from self-sustaining CRS or CRS with nontrivial self-sustaining subCRS as it weeds out all nilpotent semigroups $(\mathcal{S},\circ)$, which is the largest class of finite semigroups, cf. \cite{Satoh1994,Almeida1995}, as potential semigroup models.

\subsection{Functionally closed self-sustaining sets of chemicals} \label{sec:functionalClosure}

In the Theorem \ref{thm:maxPRAF}, it was shown how to algorithmically identify the maximal self-sustaining set of chemicals as the fixed point $X_F^{*s}$ of the sustaining dynamics.
For any given CRS, the lattice of self-sustaining subCRS is of importance to gain a deeper understanding of the modular structure the system.
However, not all self-sustaining subCRS carry the same biological importance.
For example, subCRS which are not closed according to Definition \ref{def:CRAF} should be considered as purely theoretical constructs because any subCRS occurring in the biophysical reality will always be closed.
But the closure property as given in the Definition \ref{def:CRAF} only accounts for the closure on the level of reactions and not on the level of the functions of chemicals.
This leads to the possibility that for a self-sustaining set of chemicals $X'_F \subset X_F$, the inclusion $X'_F \subset \Phi_{X'_F}(X'_F)$ is strict, i.e. that there are elements in $X_F \setminus X'_F$ which are produced from $X'$ by functionality supported on $X'$.
Therefore, a self-sustaining set $X'_F$ which is strictly included in $\Phi_{X'_F}(X'_F)$ will not be stable but will produce the chemicals $\Phi_{X'_F}(X'_F)$ over time.
Then the chemicals in $\Phi_{X'_F}(X'_F)$ might catalyze the production of even more chemicals from $\Phi_{X'_F}(X'_F)$ and the food set.
The following analogue of Proposition \ref{prop:fixedPoint} ensures that this dynamics will lead to a fixed point.

\begin{prop} \label{prop:increasingFixedPoint}
Let $(Y_n)_{n \in \mathbb{Z}_{\geq 0}}$ be the sustaining dynamics such that $Y_0 \subset Y_1$.
Then $Y_{n} \subset Y_{n+1}$ holds for all $n \in \mathbb{Z}_{\geq 0}$ and the dynamics stabilizes at the fixed point $Y_0^{*s}$.
\end{prop}
\begin{proof}
Analogous to the proof of Proposition \ref{prop:fixedPoint}.
\end{proof}

Once the fixed point $X_F'^{*s}$ is reached, no further chemicals will be produced from the set $X_F'^{*s} \cup F$ by the functionality supported on it.
By Proposition \ref{prop:PRAFsuff}, the fixed point $X_F'^{*s}$ is a self-sustaining set of chemicals.
Because $X_F'^{*s}$ contains the original self-sustaining set of chemicals $X'_F$, it should be thought of as the functional closure of $X'_F$.
This suggests the following definition.

\begin{defn} \label{def:funcClosed}
For a self-sustaining set of chemicals $X'_F$, the fixed point of the sustaining dynamics $X_F'^{*s} \supset X'_F$ is called the {\it functional closure} of $X'_F$.
A set of chemicals $X'_F \subset X_F$ is {\it functionally closed} if it satisfies
\begin{equation*}
    \Phi_{X'_F}(X'_F) = X'_F.
\end{equation*}
The corresponding subCRS $(X',R\mid_{X'},C\mid_{X'},F)$ is said to be a {\it functionally closed subCRS}\footnote{Note that a functionally closed subCRS is always closed in the sense of Definition \ref{def:CRAF}.}.
\end{defn}

This definition has not been proposed in the CRS literature thus far to the best of the author's knowledge.
One reason might be that it is cumbersome to formulate in the standard CRS terminology.
However, it is a natural definition when the language of semigroup models is available.
The functionally closed sets of chemicals of any given CRS are important objects to understand the modular structure of the CRS, because they correspond to core functional modules of the CRS.
Note that due to Proposition \ref{prop:PRAFsuff}, all functionally closed sets of chemicals are self-sustaining.

\begin{ex}
Consider the CRS shown in Fig. \ref{fig:example1}.
The subCRS generated by $X'_F = \{c,d\}$ is self-sustaining, but not functionally closed as $\Phi_{X'_F}(X'_F) = X_F$ yields the set of all non-food chemicals, which is functionally closed.
In fact, $X_F$ is the only nontrivial functionally closed set of chemicals of the given CRS.
\end{ex}

It is possible to algorithmically determine the hierarchy of all functionally closed sets of chemicals of any given CRS.
This is based on the following modification of the sustaining dynamics.

\begin{defn}
The {\it reduced sustaining dynamics} of a CRS with the initial condition $Y_0 \subset X_F$ is generated by the propagator
\begin{align*}
\mathcal{D}^{rs} : \mathfrak{X} &\rightarrow \mathfrak{X}\\
Y &\mapsto Y \cap \Phi_Y(Y).
\end{align*}
The trajectory $\{Y_n\}_{n \in \mathbb{Z}_{\geq 0}}$ of the dynamics is parametrized by $\mathbb{Z}_{\geq 0}$ as
\begin{equation*}
Y_{n+1} = Y_n \cap \Phi_{Y_n}(Y_n) \text{ for all } n \in \mathbb{Z}_{\geq 0}.
\end{equation*}
\end{defn}

\begin{rmk}
Note that for the initial condition $Y_0 = X_F$, the reduced sustaining dynamics yields the same trajectory as the sustaining dynamics, because $\Phi_{Y_n}(Y_n) \subset Y_n$ is guaranteed by Proposition \ref{prop:fixedPoint}.
\end{rmk}

The reduced sustaining dynamics behaves well with respect to functionally closed sets of chemicals:

\begin{lem} \label{lem:fcSubsets}
The reduced sustaining dynamics stabilizes for any initial condition $Y_0 \subset X_F$.
Its fixed point, which is denoted by $Y_0^{*rs}$, is the maximal functionally closed set of chemicals that is contained in $Y_0$.
\end{lem}
\begin{proof}
By definition, the trajectory of the dynamics is given by a descending chain of sets $Y_0 \supset Y_1 \supset \dots \supset Y_n \supset Y_{n+1} \dotsm$ and therefore it stabilizes.
The fixed point satisfies the equation
\begin{align*}
    Y_0^{*rs} = Y_0^{*rs} \cap \Phi_{Y_0^{*rs}}(Y_0^{*rs}),
\end{align*}
which is equivalent to the fixed point equation $Y_0^{*rs} = \Phi_{Y_0^{*rs}}(Y_0^{*rs})$, i.e. $Y_0^{*rs}$ is functionally closed.

Let $X'_F \subset Y_0$ be a functionally closed (and therefore self-sustaining) set of chemicals contained in $Y_0$
By Lemma \ref{lem:PRAFinclusion} it is contained in $\Phi_{Y_0}(Y_0)$ and thus in $Y_0 \cap \Phi_{Y_0}(Y_0)$.
An inductive application of this argument proves that $X'_F \subset Y_n$ for all $n \in \mathbb{Z}_{\geq 0}$ and thus $X'_F \subset Y_0^{*rs}$, which shows the maximality of $Y_0^{*rs}$.
\end{proof}

\noindent This suggests to define, for any set $Y \subset X_F$, the set
\begin{equation*}
   \mathfrak{M}(Y) := \left\{  (Y\setminus\{y\})^{*rs} \right\}_{y \in Y} \subset \mathfrak{X}
\end{equation*}
and, furthermore, to define iteratively 
\begin{align*}
    \mathfrak{M}^0 &:= \{ X_F^{*s} \} \\
    \mathfrak{M}^{i+1} &:= \bigcup_{Y \in  \mathfrak{M}^{i}} \mathfrak{M}(Y)
\end{align*}
for $i \in \mathbb{Z}_{\geq 0}$.
This construction yields an (algorithmic) description of all functionally closed sets of chemicals for any given CRN:

\begin{thm}[On the functionally closed subsets of chemicals] \label{thm:M}
There exists an $N \in \mathbb{Z}_{\geq 0}$ such that 
\begin{equation*}
    \mathfrak{M}^{i} = \{ \emptyset \}
\end{equation*}
for all $i >N$.
Moreover, the set
\begin{equation*}
    \mathfrak{M} := \bigcup_{i=0}^N \mathfrak{M}^{i} \subset \mathfrak{X}
\end{equation*}
is the set of all functionally closed sets of chemicals.
\end{thm}
\begin{proof}
For each $Y \in \mathfrak{M}^{i+1}$, there exists a set in $\mathfrak{M}^{i}$ which strictly contains $Y$ by construction .
Therefore, choosing $N := \mid X_F^{*s} \mid$ proves the first statement.

The maximal functionally closed set of chemicals $X_F^{*s}$ is contained in $\mathfrak{M}$ by construction.
Let $Y \subset X_F^{*s}$ be any functionally closed set of chemicals.
Then there exists a sequence of maximal length, which consists of functionally closed sets
\begin{equation*}
    Y = Y_n \subsetneq Y_{n-1} \subsetneq \dots \subsetneq Y_1 \subsetneq X_F^{*s}.
\end{equation*}
Then $Y_i \in \mathfrak{M}^{i}$ by construction and thus $Y_n \in \mathfrak{M}^{n}$ for some $n \leq N$.
Therefore, $\mathfrak{M}$ contains all functionally closed sets of chemicals.
The reverse inclusion follows from Theorem \ref{thm:maxPRAF} and Lemma \ref{lem:fcSubsets}, which state that all elements of $\mathfrak{M}$ are functionally closed sets of chemicals.
\end{proof}

The set $\mathfrak{M}$ can be constructed algorithmically and the Theorem \ref{thm:M} is thus applicable to the analysis of the CRS of real biological systems.
Moreover, only the knowledge of the functions $\Phi_Y$ of sets of chemicals $Y \subset X_F$ is required to determine both $X_F^{*s}$ and $\mathfrak{M}$.
The knowledge of the full semigroup model is, on the contrary, not required.\\

Functionally closed sets of chemicals should be important to gain insight into the modular structure of self-sustaining CRS.
Moreover, this notion could be useful to analyze the evolutionary aspects of self-sustaining CRS.
For example, if for a given chemical, there exists a unique minimal functionally closed set of chemicals containing it, then the chemical should be considered as a part of the module represented by the respective functionally closed set.
If, however, there are multiple minimal functionally closed sets containing this chemical, then it should be considered as a mediator between these sets and would more likely have appeared when the respective modules were combined.

\section{Discussion} \label{sec:discussion}

In this work, it was shown that the CRS formalism has a natural algebraic structure induced by the simultaneous and subsequent functions of catalysts.
The constructed semigroups contain all possible functional combinations and thus faithfully reflect the catalytic properties of the network.
Moreover, the partial order on $\mathcal{S}$ allows to assign a well-defined catalytic function to any subset of chemicals.
These functions respect the partial order on subsets of chemicals via inclusion and allow to define a sustaining and a reduced sustaining dynamics of the CRS.
Finally, the interplay of these structures yields a characterization of the maximal self-sustaining set of chemicals.
Moreover, a closer inspection of the sustaining dynamics naturally leads to the definition of functionally closed sets of chemicals.
Finally, a combination of the methods developed in this article allows to give a characterization of the lattice of all functionally closed sets of chemicals.
Such sets of chemicals should play an important role in the analysis of the modular structure of CRS and pose valuable targets for the analysis of CRS corresponding to real biological systems, such as the ones constructed by \cite{Sousa2015} and \cite{Xavier2022}.\\

Self-sustaining chemical reaction networks are an important area of research and this work introduces new techniques to field, which are based on the potentially powerful methods from semigroup theory.
For example, Corollary \ref{corr:nilpotentPRAF} shows that a CRS with nilpotent semigroup cannot contain any nontrivial self-sustaining subCRS.
This is an important fact by itself because most semigroups are nilpotent (any magma\footnote{A magma is a semigroup without the associativity property.} with the product of any three elements equal to zero is automatically a semigroup) and this weeds out these objects in the study of self-sustaining networks.
In semigroup theory, combinatorial problems are an important and developed field and with the construction provided in this article, such methods can now be applied to the combinatorics of self-sustaining CRS.
However, all central results given here require a representation of the semigroup model as a subsemigroup of the full transformation semigroup $\mathcal{T}(\mathfrak{X})$ and are based on particular properties of this representation.
It would be interesting to find purely algebraic descriptions of semigroup models of CRS and to restate the main results purely algebraically, i.e. with without reference to this representation.\\

Moreover, there is an equivalence between finite semigroups and finite automata established by \cite{Schutzenberger1965}.
Thus the semigroup models developed here suggest to investigate the computational capabilities of catalytic reaction systems as a future direction of research.
In this regard, it is interesting to study the inverse problem, i.e. to determine which finite semigroups can be realized as semigroup models of CRS and to analyze their computational properties. 
It is clear that not all semigroups can be interpreted as semigroup models of some CRS, because for a general finite semigroup, a partial order satisfying Lemma \ref{lemma:PO} does not exist.\\

Further interesting questions in this direction arise for the semigroups of infinite reaction networks and the classification of their computational properties.
The definitions given here extend directly to infinite networks, but the arguments based on finiteness of $\mathcal{S}$ and $\mathfrak{X}$ used in many proofs then require modification and in some cases the analogous results do not hold.
For example, the discrete dynamics can lead to a steadily growing network instead of a fixed point or periodic orbit.
Another possibility is the extension of the state space from $\{0,1\}^{X_F}$ to $\mathbb{R}_{\geq 0}^{X_F}$ by taking into account the concentrations of the chemicals.
In this case, the dynamics is governed by the classical kinetic rate equations and following the line of thought presented in this work could lead to the development of a notion of function and causality for such classical models.\\

The main motivation for the construction of algebraic models for chemical reaction networks is a natural possibility of coarse-graining through taking quotients by congruences.
In this approach, the possible coarse-graining procedures are given by the lattice of congruences on the algebraic structure.
This has the advantage that the coarse-grained model is naturally equipped with the same structure as the original model and that consecutive coarse-graining procedures over several scales are feasible.
This approach will be investigated in detail in a forthcoming publication.\\

Finally note that self-generating CRS, which are called RAF in the literature, can be characterized in an analogous fashion as presented for self-sustaining CRS in this work.
The only difference is that the sustaining dynamics $Y \mapsto \Phi_Y(Y)$ needs to be replaced by the generative dynamics $Y \mapsto \Phi_Y(\emptyset)$.
However, the proofs require a deeper understanding of the structure of the semigroups, which is provided in a companion article by \cite{loutchko2022arxiv}.

\backmatter

\bmhead{Acknowledgments}

I am deeply indebted to Gerhard Ertl for valuable discussions and his enormous support.
I thank Tetsuya J. Kobayashi, all members of the Kobayashi lab, and Hiroshi Kori for stimulating discussions and their support.
I also thank the anonymous reviewers for the careful reading of the first version of the manuscript and their various helpful suggestions.
This research is supported by JSPS KAKENHI Grant Numbers 19H05799 and 21K21308, and by JST CREST JPMJCR2011 and JPMJCR1927.


\bibliography{literatur}


\end{document}